\setlist{nolistsep}
 \newtheorem{theorem}{Theorem}[section]
 \newtheorem{lemma}[theorem]{Lemma}
 \newtheorem{fact}[theorem]{Fact}
 \newtheorem{claim}[theorem]{Claim}
 \newtheorem{corollary}[theorem]{Corollary}
 \theoremstyle{definition}
 \newtheorem{definition}[theorem]{Definition}
\patchcmd{\subject@font}{\Large}{\large}{}{}
\patchcmd{\@maketitle}{\Large}{\large}{}{}
\patchcmd{\@maketitle}{\Large}{\large}{}{}
\patchcmd{\@maketitle}{\Large}{\large}{}{}
\patchcmd{\@maketitle}{\Large}{\large}{}{}
\patchcmd{\@maketitle}{\titlefont\huge}{\titlefont\LARGE}{}{}
\DeclarePairedDelimiter{\size}{\lvert}{\rvert}
\DeclarePairedDelimiter{\p}{\lparen}{\rparen}
\DeclarePairedDelimiter{\ang}{\langle}{\rangle}
\DeclarePairedDelimiter{\ceil}{\lceil}{\rceil}
\DeclarePairedDelimiter{\cb}{\{}{\}}
\DeclarePairedDelimiter{\br}{[}{]}
\newcommand{\eq}[2][]{\begin{equation}\label{#1}\begin{split}#2\end{split}\end{equation}}
\let\phi\varphi
\let\emptyset\varnothing
\newcommand{\csp}{\textsc{\mdseries CSP}}
\newcommand{\cnf}{\textsc{\mdseries CNF}}
\newcommand{\dcnf}[1][]{\ensuremath{#1}\textsc{-\mdseries CNF}}
\newcommand{\fol}{\textsc{\mdseries FO}}
\newcommand{\N}{\mathbb{N}}
\newcommand{\NN}{\mathbb{N}_0}
\newcommand{\W}{\textsc{\mdseries W}}
\newcommand{\pwsat}{\ensuremath{p}-\textsc{\mdseries WSat}}
\newcommand{\pcwsat}{\ensuremath{p}\textit{\mdseries -clausesize-}\textsc{\mdseries WSat}}
\newcommand{\defeq}{\colonequals}
\DeclareMathOperator{\partl}{partial}
\DeclareMathOperator{\completion}{compl}
\DeclareMathOperator{\arity}{arity}
            \title{Parameterized Complexity of CSP for Infinite Constraint
              Languages
}
\author{Ruhollah Majdoddin\\
  Humboldt Universität zu Berlin, Germany\\\texttt{r.majdodin@gmail.com}}
\date{}
\begin{document}
\maketitle
\begin{abstract}
 We study parameterized Constraint Satisfaction Problem  for infinite constraint
languages. The parameters that we study are weight of the satisfying assignment, number of constraints,
maximum number of occurrences of a variable in the instance, and maximum number
of occurrences of a variable in each constraint.
A dichotomy theorem is already known for finite constraint languages with
the weight parameter.
We prove some general theorems that show, as new results, that  some well-known
problems are fixed-parameter tractable and some others are in \W$\br{1}$.
 \end{abstract}

 \section{Introduction}
 A constraint language is a
domain and a set of relations over this domain. We study the parameterized complexity of Constraint Satisfaction Problem (CSP) 
for \emph{infinite} Boolean constraint languages (where each relation has a
finite arity). This is a new subject, as it
seems that the works which explicitly address
the (parameterized) complexity of \csp\ have been concerned with finite
constraint languages. 
The parameters that we study are $k$, weight of
a satisfying assignment, $k_\le$, the maximum weight of a satisfying assignment,
 the number $u$ of constraints, the maximum number $t$ of occurrences of a variable
 in the instance, and the maximum number $e$ of occurrences of a variable in a constraint. Marx
\cite{Mar05} proves a dichotomy theorem for \csp\ with parameter $k$ for finite
 constraint languages over the Boolean domain. This is extended by Marx and Bulatov \cite{BM14}
to all finite domains. Letting the constraint language to be
infinite makes the problem not just much more general, but also much more harder. Because, fore example, many proves in
\cite{Mar05} and \cite{BM14} use the fact that there is a bound on the arity of
relations in a finite constraint language, but there is no such bound for an
infinite constraint language.

 We study constraint languages that are fpt-membership checkable, that is there
is an fpt-algorithm that given
the index of a relation and a tuple, the algorithm decides whether the tuple is in
the relation (the parameter is the \textit{weight} of the tuple, that is the number of $1$s in the tuple).   

Our mathematical concepts and notation are described in detail in the next
section. Many of our results are about constraint languages $W^A$ for  some set
$A\subseteq \mathbb{N}_0$. Roughly speaking, $W^A$ has
\textit{symmetric} relations of every arity, where $A$ is the set of permitted
weights of the tuples in the relations. For an integer $c \ge 0$, $W^c$ is the
union of all $W^A$ for any $A \subseteq \br{0,c}$.

We have two groups of results. In the first group, sections $3$ and $4$, we study  \csp\ with additional
parameters besides $k$. First, we prove that   for every set $E \subseteq \NN$,
the problem \csp$\p{W^{E}}_{k, u, e}$ is   fixed-parameter tractable. Moreover, if $E$ or $\NN \setminus E$ is finite, then  \csp$\p{W^{E}}_{k, u}$ is
fixed-parameter tractable. Notice that a trivial bounded search tree method does not give an
fpt-algorithm here (See e.g. \cite[Sec. 1.3]{FG06}), because $W^{E}$ is an
infinite constraint language  and there is no bound
on the arity of the its relations. Moreover, the additional parameter $u$ is
necessary, because \csp$\p{W^{\cb{1}}}_k$, called \textsc{Weighted Exact CNF},
is \W$\br{1}$-hard \cite{DF95b}.

It then follows that for every set $E \subseteq \N$,
the problem \csp$\p{W^{E}}_{k, t, e}$ is   fixed-parameter tractable. Moreover, if $E$ or $\N \setminus E$ is finite, then  \csp$\p{W^{E}}_{k, t}$ is
fixed-parameter tractable. In the following we present some interesting examples.
\begin{itemize}
  \item \csp$\p{\W^{\mathbb{N}}}_{k,u}$ and \csp$\p{\W^{\mathbb{N}}}_{k,t}$
   are fixed-parameter tractable.

   Notice that \csp$\p{\W^{\mathbb{N}}}_{k}$ is equivalent to
   \pwsat$\p{\cnf^+}$. 
This result is interesting, because   \pwsat$\p{\cnf^+}$ 
   and \pwsat$\p{\cnf}$  are both  \W$\br{2}$-complete \cite{DFR98}.
   It is noteworthy that for every $d \ge 1$, \pwsat$\p{\dcnf[d]}$ and \pwsat$\p{\dcnf[d]^-}$ are
   \W$\br{1}$-complete \cite{DFR98}, but \pwsat$\p{\dcnf[d]^+}$ 
   is fixed-parameter tractable \cite{Mar05}. Finally \pcwsat$\p{\cnf}$ is \W$\br{1}$-complete.
   \item  \csp$\p{W^{\cb{1}}}_{k, u}$ and \csp$\p{W^{\cb{1}}}_{k, t}$ are
    fixed-parameter tractable.

    This is the problem \textsc{Weighted Exact CNF}  with the additional parameter $u$ or $t$.
    \item \csp$\p{W^{\text{odd}}}_{k, t\le 3}$ and \csp$\p{
  W^{\text{even}}}_{k\le, u}$ are fixed-parameter tractable. 

It is proved in \cite{DFVW99} that \csp$\p{
  W^{\text{odd}} \cup  W^{\text{even}}}_{k\le}$, \csp$\p{
  W^{\text{odd}}}_{k\le}$ and even \csp$\p{
  W^{\text{even}}}_{k}$ are \W$\br{1}$-hard. Arvind et al. \cite{AKKT14} proved
that the hardness holds also with the additional parameter $t$, and even if $t$ is bounded
to $t\le 3$, that is \csp$\p{W^{\text{odd}} \cup  W^{\text{even}}}_{k\le, t\le
  3}$ and  \csp$\p{
  W^{\text{even}}}_{k, t\le 3}$ are \W$\br{1}$-hard.  Notice that both papers study these
problems in the setting of linear equations $Ax=b$ over $\mathbb{F}_2$. But, to
the best of our knowledge, the complexity of \csp$\p{
  W^{\text{odd}}}_{k, t\le 3}$ has been left open, and by our result above, it is
fixed-parameter tractable. We find this somewhat surprising, as it is the only known case of parameterized \csp\ over
\textit{parity} constraint languages that introducing the additional
parameter $t$ reduces the complexity of the problem. An important  open problem
is whether
\csp$\p{
  W^{\text{even}}}_{k\le}$, called \textsc{Even Set}, is \W$\br{1}$-hard \cite{CFJK14}. It is even not known whether 
\csp$\p{
  W^{\text{even}}}_{k\le, t}$ is \W$\br{1}$-hard. Still, we  prove that  \csp$\p{
  W^{\text{even}}}_{k\le, u}$ is fixed-parameter tractable.
  \end{itemize}
Also, it has  been left open whether \csp$\p{
  W^{\text{odd}} \cup  W^{\text{even}}}_{k, t}$ is in
\W$\br{1}$. Our second result answers this positively: 
For every (possibly infinite) constraint language $\Gamma$, it holds
\csp$\p{\Gamma}_{k, t}$ $\in$ \W$\br{1}$. 

Our second group of results, presented in sections $5$ and $6$, is about
containment in \W$\br{1}$. Downey and Fellows showed that that  \textsc{Weighted Exact CNF}, in our setting
\csp$\p{W^{\cb{1}}}_{k}$, reduces to $p$-\textsc{Perfect Code} and vice versa,
and proved that the problem is \W$\br{1}$-hard by means of a reduction from
$p$-\textsc{Independent-Set} \cite{DF95b, DF99}. They conjectured that the problem could be of difficulty intermediate between \W$\br{1}$
and \W$\br{2}$, and thus not \W$\br{1}$-complete \cite[pp. 277, 458, 487]{DF99}.
Surprisingly, Cesati  proved that \csp$\p{W^{\cb{1}}}_{k}\in \W\br{1}$\cite{Ces02, Ces03}. He uses reductions
to \textsc{Short NonDeterministic Turing Machine Acceptance}. In this problem the Turing
Machine can have fpt-size state space and alphabet, but the parameter is the runtime
of the machine.

A natural question is whether this result can be generalized. First, by a somewhat
technical proof, which is an adaption of  the proof of \cite{Ces02}, we show that for every $d \ge 0$,
\csp$ \p{CW^{\br{d}}}_{k}$ $\in$ \W$\br{1}$. Notice that $W^{\br{d}}
\subset CW^{\br{d}}$. 
We generalize this, by still another involved proof, as follows: If for a (possibly
infinite) constraint language  $\Gamma$, there be  an integer
$d \ge 1$ such that for every relation $R \in \Gamma$, size of each set $T \in
R$ is at most $d$, then \csp$\p{\Gamma}_{k} $ $\in$ \W$\br{1}$.
This implies our ultimate generalization: 
 For every $d \ge 0$,   \csp$ \p{W^d}_{k} \in  \W\br{1}$. 

 \section{Preliminaries}

We denote the set of positive integers by $\N$ and the set of
nonnegative integers by $\NN$.
For integers $0 \le a \le b$, we use the notation $\br{a, b} \defeq \cb{a, a+1,
  \ldots, b}$. For an integer $a > 0$, we write $\br{a} \defeq \br{1, \ldots, a}$, and
we denote $\br{0}\defeq \emptyset$. We encode integers in binary.

We use the abbreviations
\eq{
&  \text{even} \defeq \cb{i| i = 2j, \, j \in \NN },\\
&  \text{odd} \defeq \cb{i| i = 2j+1, \, j \in \NN}.
  }
For a set $X$, we denote the powerset of $X$ by $2^X$. For sets $A$ and $B$ and a function $f: A
\rightarrow B$,  the \textit{image of a set $D\subseteq A$ under $f$} is
defined as $f\p{D}
\defeq \cb{y \in B| \exists x \in A,  y = f\p{x}}$, and the
\textit{preimage of a set $E\subseteq B$ under $f$} is defined as $f^{-1}\p{E}\defeq \cb{x \in A|
  f\p{x}\in E}$.

We use the basic definitions of parameterized complexity theory in
\cite{FG06}, including the following.
  Let $\Sigma$ be a nonempty finite alphabet. We refer to sets  $Q\subseteq\Sigma^*$  of strings over
$\Sigma$ as \textit{classical} decision problem. A \textit{parameterization} of
$\Sigma^*$ is a mapping $\kappa : \Sigma^* \rightarrow \N$ that is polynomial
time computable. A \textit{parameterized problem} (over $\Sigma$) is a pair
$\p{Q, \kappa}$ consisting of a set $Q \subseteq \Sigma^*$ of strings over
$\Sigma$ and a parameterization $\kappa$ of $\Sigma^*$. 

Let $D$ be a set. A relation $R$ of arity $\arity\p{R} \ge 1$ over domain $D$ is a
subset:  $R \subseteq D^{\arity\p{R}}$. In this paper we consider only relations
of finite arity.
 A set $\Gamma$ of relations over $D$,  $\Gamma \defeq\cb{R_i}_{i \in \N}$, is
 called a \textit{constraint language over domain $D$}. For a constraint $R_i
 \in \Gamma$, $i$ is called the \textit{index} of $R_i$ (in $\Gamma$).

In this paper, other than in the context of structures and first order logic
(Section \ref{section-fpt}), we
always have \textit{Boolean domains}. That is
$D=\cb{0,1}$, and we identify a
tuple in $D^r$ with a subset of $\br{r}$, which is the set of $1$ positions of
the tuple. This results in the  definition of a \textit{Boolean
  relation} as a set of subsets: $R \subseteq 2^{\arity\p{R}}$. This is the
definition that we use in this paper. A Boolean constraint language is  defined likewise. 

A constraint language $\Gamma\defeq\cb{R_i}_{i \in \N}$ is called \textit{membership
checkable}  if there is a Turing machine $B$ (membership checker) such that for any relation $R_i\in
\Gamma$ of arity say $q$, and for any $T \subseteq \br{q}$,
the machine $B$ with input $\p{i, T}$
outputs $1$ if $T\in R_i$, and $0$ otherwise. We say $\Gamma$ is \textit{fpt membership checkable} if there is a
computable function $f_B:\NN \rightarrow \cb{0,1}$ and an integer $c$ 
such that $B$ halts in at most 
$f_B\p{\size{T}} \p{\log i}^c$ steps. We assume that all the  constraint languages
in this paper are  fpt membership
checkable. This is arguably a mild condition and includes many interesting
natural problems. 

  For a set  $A \subseteq \NN$, we define the relation 
 $W^A_m$ of arity $m$ as
\eq{
  W^A_m\defeq \cb{T | T\subseteq \br{m}, \, \size{T} \in A},
  }
and the constraint language $W^A = \cb{W^A_i}_{i \in \N}$. The constraint
languages $W^{odd}$ and $W^{even}$  are defined
accordingly. For an integer $c \ge 0$, define the constraint language $W^c$ as
\eq {
W^c \defeq  \bigcup_{A \subseteq \br{0, c}} W^{A}.
  }
  Define the constraint language $W$ as
\eq {
W \defeq  \bigcup_{A \subseteq \NN} W^{A}.
  }

One interesting example is
$W^{\br{m}}_m$, which is equivalent to disjunction. So  $W^{\N}$ is the
constrain language  of disjunctions of any arity.  The other interesting example is
$W^{\cb{1}}$, which can be used to define the \textsc{Weighted Exact  CNF} problem.

For  a set  $A \subseteq \NN$ and  integers $d, m \ge 0$, we define
the relation \textit{conditional weight}, $ CW^A_{d,m}$, of arity $d+m$, as
\eq{
CW^A_{d,m} \defeq \cb{T \subseteq \br{d+m}\;|\; \br{d} \subseteq T
  \Rightarrow  \size{\br{d+1, d+m} \cap T} \in A},
}
and the constraint language
\eq{
  CW^A \defeq \cb{CW^A_{d, m}}_{d, m \ge 0}.
  }
A degenerate case is $d=0$, where
\eq{
  CW^A_{0,m} \defeq W^A_m.
}
Another degenerate case is $0 \notin A$ and $m=0$, then $CW^A_{d, 0}$ is
equivalent to Nand. A Horn clause  can be expressed as $CW^{1}_{d, 1}$.
  
For a constraint language  $\Gamma=\cb{R_i}_{i \in \N}$ and a set of
\textit{variable symbols} (or simply \textit{variables}) $V$, a
\textit{constraint} $\psi$  is a pair $\psi \defeq \p{m, \p{v_1, \ldots,
    v_r}}$, where $m \in \N$, $r \defeq
\arity\p{R_m}$ and $\p{v_1, \ldots, v_r}
\in V^r$. We denote $\psi$ either by $R_m\ang{v_1, \ldots, v_r}$, or by
$R_m\ang{\omega}$, for the mapping $\omega:\br{r}
\rightarrow V$, where for every $i \in \br{r}$, $\omega\p{i} \defeq
v_i$.

An \textit{assignment} $D$ of $V$ is a subset of $V$. We say that $D$ satisfies
$\psi$ if $T \in R_m$, where $
  T \defeq \cb{i | i \in \br{r}, v_i \in D}$.

We denote \csp\ for a constraint language $\Gamma$ by \csp$\p{\Gamma}$. An
instance $I$ of \csp$\p{\Gamma}$ is a pair $I\defeq\p{V, C}$,
where $V$ is the set of variables and $C$ is a sequence of constraints for $\Gamma$. We write $\size{C}$ to denote the length of $C$, and write $C\p{i}$ to
denote the $i$th constraint in $C$. An
assignment $D$ of $V$ satisfies $I$, if $D$ satisfies every constraint in
$C$.

Parameterized \csp\ has exactly one of the parameters $k$ and $k_\le$,  and possibly any of the parameters $t, u$ and $e$.

Parameter $k$ is called \textit{weight}, and an instance of \csp\ with this parameter, \csp$\p{\Gamma}_k$, is a pair $I
\defeq \p{V, C}$, where  $C$ is defined as in \csp$\p{\Gamma}$, but additionally
the first constraint is of form $C\p{1}\defeq W_{\size{V}}^{\cb{k_0}}\ang{v_1, \ldots, v_m}$, where
$V\defeq \cb{v_1, \ldots, v_m}$. We define $k\p{I}\defeq
k_0$. This means that if some $D\subseteq V$ is a satisfying assignment, then $\size{D} = k_0$.  

Likewise, parameter $k_\le$ is called \textit{weight-less-than}, and an instance
of \csp\ with this parameter, \csp$\p{\Gamma}_{k_\le}$, is a pair $I
\defeq \p{V, C}$, where  $C$ is defined as in \csp$\p{\Gamma}$, but additionally
the first constraint is of form $C\p{1}\defeq W_{\size{V}}^{\br{k_0}} \ang{v_1, \ldots, v_m}$, where
$V\defeq \cb{v_1, \ldots, v_m}$. We define $k_\le\p{I}\defeq
k_0$. This means that if some $D\subseteq V$ is a satisfying assignment, then $\size{D} \le k_0$.  

It was possible, and is common in the literature, to give the (integer) value of
the parameters $k$ and $k_{\le}$ as  part of the input. But our definition of the
parameterized problem is arguably more homogeneous, as it demonstrates that these weight conditions are
indeed just another constraint.

For an instance $I\defeq\p{V, C}$ of  \csp$\p{\Gamma}_{k, u, t, e}$,  $u$ is the
number of constraints,
\eq{
  u\p{I} \defeq\size{C},
}
 $t$ is the maximum number of occurrences of any variable  in the whole instance,
\eq{
t\p{I} \defeq \max_{v \in V} \sum_{\substack{i \in \br{\size{C}}\\C\p{i} = R\ang{\omega}}} \size{\omega^{-1}\p{\cb{v}}},
  }
  and  $e$  is the maximum number of occurrences of any variable  in any constraint,  
  \eq{
e\p{I} \defeq \max_{v \in V} \max_{\substack{i \in \br{\size{C}}\\C\p{i} = R\ang{\omega}}} \size{\omega^{-1}\p{\cb{v}}}.
}

We write $\csp\p{\Gamma_1}_t\p{\Gamma_2}_{k_\le}$ to denote  parameterized \csp\ for
the constraint language $\Gamma_1 \cup \Gamma_2$, where parameter $k$ is defined as
usual, but the  parameter $t$ applies only to the constraints for $\Gamma_1$. More
formally, for an instance $I\defeq\p{V, C}$
\eq{
t\p{I} \defeq \max_{v \in V} \sum_{\substack{i \in \br{\size{C}}\\C\p{i} = R\ang{\omega}\\R \not\in \Gamma_2}} \size{\omega^{-1}\p{v}}.
}

Sometimes the problems are restricted to those instances with a fixed or bounded
parameter value. We denote this in the parameter list. For
example, for each instance $I$ of problem \csp$\p{\Gamma}_{k, t\le 3}$, it holds
$t\p{I} \le 3$. 

\begin{fact}
  \label{k-le-to-k}
  For every constraint languages $\Gamma$, $\csp\p{\Gamma}_{k_{\le}}
  \le^{\text{fpt}} \csp\p{\Gamma}_{k}$.
\end{fact}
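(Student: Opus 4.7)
The plan is to reduce $\csp(\Gamma)_{k_{\le}}$ to $\csp(\Gamma)_k$ by a padding argument that converts an ``at most $k_0$'' weight constraint into an ``exactly $k_0$'' constraint via the introduction of unconstrained dummy variables. Given an instance $I = (V, C)$ of $\csp(\Gamma)_{k_{\le}}$ with $k_{\le}(I) = k_0$ and $V = \{v_1, \ldots, v_m\}$, I would construct an instance $I' = (V', C')$ of $\csp(\Gamma)_k$ by setting $V' \defeq V \cup \{d_1, \ldots, d_{k_0}\}$, where $d_1, \ldots, d_{k_0}$ are fresh variables not occurring in $V$. The new constraint sequence $C'$ keeps $C(2), C(3), \ldots$ unchanged (these involve only variables in $V$) and replaces the leading weight-less-than constraint $C(1) = W_{m}^{[k_0]}\ang{v_1, \ldots, v_m}$ with the exact-weight constraint
$$ C'(1) \defeq W_{m + k_0}^{\{k_0\}}\ang{v_1, \ldots, v_m, d_1, \ldots, d_{k_0}}. $$
The parameter is preserved: $k(I') = k_0 = k_{\le}(I)$, and the construction is clearly computable in polynomial time in $|I|$, so it is an fpt-reduction.

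For correctness, I would argue both directions of equivalence. If $D \subseteq V$ is a satisfying assignment for $I$ with $|D| \le k_0$, then $D^* \defeq D \cup \{d_1, \ldots, d_{k_0 - |D|}\}$ is an assignment of $V'$ of weight exactly $k_0$; it satisfies $C'(1)$ by construction, and it satisfies every remaining constraint because those involve only variables in $V$ and $D^* \cap V = D$. Conversely, if $D^* \subseteq V'$ satisfies $I'$ with $|D^*| = k_0$, then $D \defeq D^* \cap V$ has size at most $k_0$, so it satisfies $C(1) = W_m^{[k_0]}\ang{v_1, \ldots, v_m}$, and the remaining constraints of $C$ are satisfied because they coincide with constraints in $C'$ and depend only on the $V$-part of the assignment.

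There is essentially no obstacle here: the argument is a standard padding construction, and the new weight constraint $W_{m+k_0}^{\{k_0\}}$ is of the form explicitly allowed as the leading constraint in the definition of $\csp(\Gamma)_k$ (it is not required to belong to $\Gamma$). The fpt-membership-checkability assumption on $\Gamma$ plays no role, since we do not alter $\Gamma$ or the relations of the non-weight constraints.
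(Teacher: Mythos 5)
Your padding argument is correct and is the standard proof of this reduction; the paper states this as a bare Fact with no proof at all, so there is no "paper approach" to compare against, and your construction (fresh dummy variables, replace the leading $W_{m}^{\br{k_0}}$ constraint by $W_{m+k_0}^{\cb{k_0}}$ over the padded variable set, leave $\Gamma$-constraints untouched) is exactly what is needed. One pedantic caveat: under the paper's own notation $\br{k_0}=\cb{1,\ldots,k_0}$, the relation $W_m^{\br{k_0}}$ excludes the empty set, so with $k_0$ dummies your reverse direction can fail on the branch where $D^*$ consists entirely of dummy variables (then $D=D^*\cap V=\emptyset$ has weight $0\notin\br{k_0}$, yet $I'$ is accepted); padding with only $k_0-1$ dummies forces $\size{D^*\cap V}\ge 1$ and closes this boundary case. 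Everything else --- preservation of the parameter value, polynomial running time, and the observation that the leading weight constraint need not belong to $\Gamma$ and that fpt membership checkability is irrelevant here --- is handled correctly.
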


We use the  notation of \cite{FG06} for propositional logic , including the following. We denote by \cnf\ the class of all propositional formulas in \textit{conjunctive
  normal form}.  $\cnf^+$ denotes the subclass of \cnf\ that there is no  negation
symbol in a formula, and $\cnf^-$ denotes the subclass of \cnf\ that there is a
negation symbol in  front of every variable in a formula.

For any class
          $A$ of propositional formulas, the \textit{parameterized weighted satisfiability
            problem for}  $A$ is defined as follows:
          \vspace{3mm}

\noindent
\begin{tabular}{ l l }
\textsc{\pwsat$\p{A}$}  \\
  \textit{Instance:} &$\alpha \in A$ and $k \in \mathbb{N}$.  \\
\textit{Parameter:} &$k$.\\
\textit{Question:} &Decide whether $\alpha$ is $k$-satisfiable.\\
\end{tabular}

          \vspace{3mm}
For class \cnf\, we also define 
          \vspace{3mm}

\noindent
\begin{tabular}{l}
\textsc{\pcwsat$\p{\cnf}$}  \\
\begin{tabular}{@{}l l@{}}
\textit{Instance:} &$\alpha \in $ \cnf\ and $k \in \mathbb{N}$.  \\
\textit{Parameter:} &$k+d$, where $d$ is the maximum number of literals in a
clause of $\alpha$.\\
\textit{Question:} &Decide whether $\alpha$ is $k$-satisfiable.\\
\end{tabular}
\end{tabular}
\vspace{3mm}

We denote the size of input by $n$.
\section{Number of Occurrences of Variables - in FPT}
\label{section-fpt}
We use the
notation for first-order logic from \cite{FG06}. The class of all
first order formulas is denoted by \fol. 

A \textit{(relational) vocabulary} $\tau$ is a set of relation symbols. Each
relation symbol $R$ has an \textit{arity} $\arity\p{R} \ge 1$. A
\textit{structure} $\cal A$ of  vocabulary $\tau$, or \textit{$\tau$-structure} (or simply
 \textit{structure}), consists of a set $A$ called the \textit{universe} and
and interpreteation $R^{\cal A} \subseteq A^{\arity\p{R}}$ of each  relation symbole $R
\in \tau$. We write $R^{\cal A}\bar a$ to denote that the tuple $\bar a \in
A^{\arity\p{R}}$ belongs to the relation $R^{\cal{A}}$.

The \textit{parameterized model-checking problem} for a class $\Phi$ of formulas
is defined as follows.\vspace{3mm}

\noindent
\begin{tabular}{ l l }
\textsc{$p$-MC$\p{\Phi}$}  \\
  \textit{Instance:} &A structure $\cal A$ and a formula $\varphi \in \Phi$.  \\
\textit{Parameter:} &$\size{\varphi}$.\\
\textit{Question:} &Decide whether $\phi\p{\cal A} \ne \emptyset$.\\
\end{tabular}

The restriction of $p$-MC$\p{\Phi}$ to input structures from a class $D$ of
structures is denoted by $p$-MC$\p{D, \Phi}$.

Let $A$ be a $\tau$-structure. The \textit{Gaifman graph} (or \textit{primal
  graph}) of a $\tau$-structure $\cal A$ is the graph $G\p{\cal{A}}$$\defeq \p{V, E}$, where
$V \defeq A$ and
\eq{
  E \defeq \cb{
    \cb{a, b} |     &a,b \in A, a \ne b, \text{ there exists an } R\in \tau \text{ and a tuple }\\
    &\p{a_1, \ldots, a_r} \in R^{\cal A}, \text{ where  } r \defeq \arity\p{R}, \text{such
      that }\\
    &a,b \in \cb{a_1, \ldots, a_r}
    }
}
The \textit{degree} of a structure $\cal A$ is the degree of its Gaifman graph
$G\p{\cal A}$. 

The following corollary follows from \cite{FG01}.
\begin{theorem}[see {\cite[Corollary 12.23]{FG06}}]
  \label{structure-degree-fpt}
     Let $d \in \mathbb{N}$. The parameterized model-checking problem for
     first-order logic on the class of structures of degree at most $d$ is
     fixed-parameter tractable. 
     \end{theorem}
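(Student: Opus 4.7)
The plan is to invoke Gaifman's locality theorem to reduce model-checking of an arbitrary first-order sentence $\varphi$ to checking a Boolean combination of \emph{basic local sentences}. A basic local sentence has the form
\eq{
\exists x_1\cdots\exists x_k\Bigl(\bigwedge_{1\le i<j\le k}\operatorname{dist}(x_i,x_j)>2r\;\wedge\;\bigwedge_{i=1}^{k}\psi^{(r)}(x_i)\Bigr),
}
where $\psi^{(r)}(x)$ is an $r$-local formula around $x$ (its quantifiers range only over the $r$-ball of $x$ in the Gaifman graph), and the numbers $r,k$ together with the shape of the Boolean combination depend only on $\size{\varphi}$. This reduction is purely syntactic, so it costs only time $f_1(\size{\varphi})$.

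Next I would exploit the degree bound. In a $\tau$-structure $\cal A$ of degree at most $d$, the $r$-ball around any vertex $a \in A$ has at most $1+d+d^2+\cdots+d^r\le d^{r+1}$ elements and can be computed by breadth-first search in $G(\cal A)$ in time $O(d^{r+1})$. The isomorphism type (as a $\tau$-structure with a distinguished centre) of this ball is an object whose size is bounded by a function of $\size{\varphi}$ and $d$, so there are only $h(\size{\varphi},d)$ possible local types, and whether $\psi^{(r)}(a)$ holds depends solely on the local type of $a$. Hence after a single pass through $\cal A$ I can classify every vertex by local type in fpt time $h(\size{\varphi},d)\cdot \size{A}$.

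It remains to decide each basic local sentence, i.e.\ to find a $(2r+1)$-scattered set of $k$ vertices of prescribed local types. Here I would use a greedy argument: pick any candidate vertex $a$ of the right type, remove its $2r$-ball (of size at most $d^{2r+1}$, a parameter), and repeat. If the number of candidates of a given type exceeds $k\cdot d^{2r+1}$, the greedy procedure succeeds; otherwise the candidate set has size bounded by the parameter, and one may exhaustively try all $k$-subsets. Finally the outer Boolean combination is evaluated, which involves only the truth values of finitely many basic local sentences.

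The hard part, which I would simply cite from \cite{FG06}, is the proof of Gaifman's theorem itself — the explicit syntactic rewriting of $\varphi$ into a Boolean combination of basic local sentences and the bookkeeping that the locality radius $r$ and the number $k$ of scattered witnesses depend only on $\size{\varphi}$. Given that result, the remaining work is routine: compute local types in fpt time, count how many vertices of each type are present, and decide scattered-existence by the greedy/brute-force dichotomy above.
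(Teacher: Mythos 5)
The paper offers no proof of this statement: it is imported verbatim as \cite[Corollary 12.23]{FG06} (which in turn rests on \cite{FG01}), so the only meaningful comparison is with the proof in the cited source. Your sketch is precisely that standard argument — Gaifman's theorem to reduce to a Boolean combination of basic local sentences, classification of vertices by the isomorphism type of their $r$-balls (of size at most $1+d+\cdots+d^{r}$ under the degree bound), and the greedy/brute-force dichotomy for finding a $(2r+1)$-scattered set of witnesses — and it is correct, modulo the Gaifman rewriting you explicitly cite rather than reprove.
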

     \begin{theorem}
       \label{k-u-fpt}
       Let $E \subseteq \NN$. Then \csp$\p{W^{E}}_{k, u, e}$ is 
           fixed-parameter tractable. Moreover, if $E$ or $\NN \setminus E$ is finite, then  \csp$\p{W^{E}}_{k, u}$ is
          fixed-parameter tractable.
\end{theorem}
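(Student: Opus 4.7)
The plan is to reduce satisfiability to enumerating an fpt-bounded number of integer tuples by exploiting the symmetry that variables agreeing on their multiplicity in every constraint are interchangeable. Write $I = \p{V, C}$ with $C\p{1}$ the weight constraint and $C\p{2}, \ldots, C\p{u}$ of the form $W^E_{m_i}\ang{\omega_i}$. For each variable $v$ and each $i \in \br{2, u}$ set $j_{i, v} \defeq \size{\omega_i^{-1}\p{\cb{v}}}$, and define the \emph{type} of $v$ as the tuple $\p{\tilde j_{2, v}, \ldots, \tilde j_{u, v}}$, where $\tilde j_{i, v} \defeq \min\p{j_{i, v}, M+1}$ and $M$ is a truncation bound chosen as follows: $M = e$ for the first assertion, $M = \max E$ when $E$ is finite, and $M = \max\p{\NN \setminus E}$ when $\NN \setminus E$ is finite. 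The number of types is then at most $\p{M+2}^{u-1}$, which is fpt-bounded in the relevant parameters.

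Let $V_\theta$ be the set of variables of type $\theta$ and $n_\theta \defeq \size{V_\theta}$. A satisfying assignment $D$ of weight $k$ is determined, up to permutation within each $V_\theta$, by the tuple $\p{s_\theta}_\theta$ with $s_\theta \defeq \size{D \cap V_\theta}$, subject to $\sum_\theta s_\theta = k$ and $0 \le s_\theta \le n_\theta$. The algorithm enumerates these tuples, at most $\p{k+1}^{\p{M+2}^{u-1}}$ of them, still fpt-bounded, and for each verifies that every non-weight constraint $c_i$ is satisfied.

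Verifying $c_i$ reduces to checking $w_i \defeq \sum_\theta s_\theta\, j_{i, \theta} \in E$ via the fpt membership checker of $W^E$. For the first assertion, all $j_{i,v} \le e$ directly, so $w_i \le ke$. In the $E$ finite case, a type $\theta$ with $\tilde j_{i, \theta} = M+1$ forces $s_\theta = 0$ (otherwise $w_i \ge M+1 > \max E$ would violate $w_i \in E$), and for the surviving types $j = \tilde j \le M$, so $w_i \le kM$. In the $\NN \setminus E$ finite case the roles reverse: if some $s_\theta > 0$ has $\tilde j_{i, \theta} = M+1$, then $w_i \ge M+1 > \max\p{\NN \setminus E}$, so $w_i \in E$ and $c_i$ is automatically satisfied; otherwise all relevant multiplicities are bounded by $M$ and again $w_i \le kM$. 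In every sub-case the membership query costs $f_B\p{w_i}\p{\log m_i}^c$ time, which is fpt.

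The main step requiring care is confirming that the type collapse does not lose information: this is immediate because any permutation within a type class preserves every constraint, so the satisfiability of $I$ depends only on the counts $\p{n_\theta}$, the chosen $\p{s_\theta}$, and the $E$-membership predicate. Putting the pieces together yields the claimed fpt running time $\p{k+1}^{\p{M+2}^{u-1}} \cdot n^{O\p{1}}$, establishing the theorem.
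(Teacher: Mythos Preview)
Your argument is correct and gives a more elementary, self-contained proof than the paper's. The paper instead encodes the instance as a relational structure over universe $V$ carrying only \emph{unary} relations $R_{i,j}$ (meaning ``variable $v$ occurs exactly $j$ times in constraint $i$''), writes a first-order sentence $\phi$ of size bounded in $k$, $u$, and $h \defeq \min\cb{e, \max E, \max\p{\NN\setminus E}}$ expressing the existence of a weight-$k$ solution, and then invokes the metatheorem that \fol\ model checking is fixed-parameter tractable on structures of bounded degree (here degree $0$). Your grouping of variables by their truncated occurrence profile is precisely the combinatorial content underlying that metatheorem in this special case: over a degree-$0$ structure there are only boundedly many atomic $1$-types, and your types coincide with them. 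What you gain is an explicit running-time bound and independence from the Seese/Gaifman-locality machinery; what the paper's route gains is brevity once the black box is accepted. One cosmetic point: your symbol $j_{i,\theta}$ is not well-defined when $\tilde j_{i,\theta} = M+1$, and ``preserves every constraint'' should strictly read ``preserves the \emph{satisfaction} of every constraint'' (in the case where $\NN\setminus E$ is finite, a within-type swap can change the raw weight $w_i$, though it remains in $E$); your subsequent case analysis already handles this correctly, so the issue is purely notational.
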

  \begin{proof}
To each  instance $I=\p{V, C}$ corresponds a structure $\cal A$ and a formula
$\phi \in \fol$, described below, such that $\phi\p{\cal{A}} \ne \emptyset$ if and only if $I
$ has a solution.

Let $h\defeq \min \cb{e\p{I}, \max E, \max \p{\NN \setminus E}}$. Let the universe of $\cal A$ be $A\defeq
V$, and $\tau \defeq \cb{R_{i, j} | i \in \br{2, \size{C}}, j \in \br{0, h}}$,
where each  $R_{i,j}$
is a unary relation as follows. Let the $C\p{i}\defeq W_d^E\ang{\sigma}$. Then
$\forall v \in V \quad R_{i,j}^{\cal A} v$ if and only if $\size{\sigma^{-1}\p{v}}=j$. 
Because the vocabulary $\tau$ has only unary relations, the degree of structure
$\cal A$ is $0$.
Let $D$ be the class of all structures constructed this way. It follows from
Theorem \ref{structure-degree-fpt}  that $p$-MC$\p{D, \textsc{FO}} $ is
fixed-parameter tractable.

Let $k_0 \defeq k\p{I}$. Let formulas $\phi_1$ and $\phi_2$ be as follows
\eq{
  \phi_1 \defeq \exists x_1 \ldots \exists x_{k_0}  \p[\Big]{
    &\bigwedge_{\substack{i, j \in \br{{k_0}}\\ i\ne j}}
    x_i \ne x_j \\
 \land &\bigwedge_{i \in \br{2, \size{C}}}\;
  \bigvee_{\substack{c_1, \ldots, c_{k_0} \in \br{0, h}\\ \sum c_i \in E}}
  \;\bigwedge_{j \in \br{k_0}} R^{\cal{A}}_{i, c_j} x_j  
},
}
\eq{
  \phi_2 \defeq \exists x_1 \ldots \exists x_{k_0}  \p[\Big]{
    &\bigwedge_{\substack{i, j \in \br{{k_0}}\\ i\ne j}}
    x_i \ne x_j \\
 \land &\bigwedge_{i \in \br{2, \size{C}}}\;
  \bigwedge_{\substack{c_1, \ldots, c_{k_0} \in \br{0, h}\\ \sum c_i \not\in E}}
  \;\lnot \bigwedge_{j \in \br{k_0}} R^{\cal{A}}_{i, c_j} x_j
}.
}
If $E$ is finite, set $\phi \defeq \phi_1$. If $\NN \setminus E$ is finite, set
$\phi \defeq \phi_2$. If $e\p{I}$ is finite, then $\phi$ can  be set as $\phi_1$ or $\phi_2$.
It is straightforward to see that $\size{\phi}$ depends only on $k_0$, $h$ and $u\p{I}$.
This completes the proof.
    \end{proof}
    Notice that by Fact. \ref{k-le-to-k}, the above result still holds if paramater $k$ is
    replaced by $k_{\le}$.
    \begin {fact}\label{parity-without-t-u}
\csp$\p{W^{\text{odd} }\cup W^{\text{even}}}_{k,u, e}$ and \csp$\p{W^{\text{odd} }\cup W^{\text{even}}}_{k,t, e}$
 are fpt-reducible   to \csp$\p{W^{\text{odd} }\cup
   W^{\text{even}}}_{k,u}$ and \csp$\p{W^{\text{odd} }\cup
   W^{\text{even}}}_{k,t}$, respectively. The reduction can be computed in polynomial time.
\end{fact}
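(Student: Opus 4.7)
The plan is to exploit the following parity observation: for a parity constraint $R\ang{\omega}$ of arity $r$ (so $R = W^{\text{odd}}_r$ or $R = W^{\text{even}}_r$), an assignment $D \subseteq V$ satisfies it iff $\size{\cb{i \in \br{r} \,|\, \omega\p{i} \in D}}$ has the prescribed parity. Splitting this size by which variable is assigned to each position gives $\sum_{v \in D} \size{\omega^{-1}\p{\cb{v}}}$, whose value modulo $2$ equals $\size{\cb{v \in D \,|\, \size{\omega^{-1}\p{\cb{v}}} \text{ is odd}}} \bmod 2$. Hence variables occurring an even number of times inside a parity constraint are irrelevant to its satisfaction, and variables occurring an odd number of times behave identically to a single occurrence.

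Given this, I would define the reduction as follows: process each constraint $C\p{i}$ for $i \ge 2$, leaving the initial weight constraint $C\p{1}$ alone (since by definition it already lists each variable exactly once); compute $c_v \defeq \size{\omega^{-1}\p{\cb{v}}}$ for every variable, and replace $C\p{i}$ by a constraint $R'\ang{\omega'}$, where $R' \in W^{\text{odd}} \cup W^{\text{even}}$ has the same parity as $R$ and $\omega'$ lists each variable $v$ with $c_v$ odd exactly once. The parity observation gives equisatisfiability. In the output every variable occurs at most once in every constraint, so $e\p{I'} = 1$, while $u\p{I'} \le u\p{I}$, $t\p{I'} \le t\p{I}$, and $k$ is unchanged, so the same construction handles both cases of the statement. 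The rewriting is clearly computable in polynomial time in the size of the input.

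A small bookkeeping point remains: if every variable of some parity constraint has even $c_v$, the reduction produces arity $0$, which the formalism forbids. If the parent was an even-parity constraint it is tautologically satisfied and can simply be dropped from $C$; if it was odd-parity it is unsatisfiable, and the whole output can be replaced by a fixed canonical unsatisfiable instance of constant size, for example $V \defeq \cb{v}$, $C\p{1} \defeq W^{\cb{0}}_1\ang{v}$, $C\p{2} \defeq W^{\text{odd}}_1\ang{v}$. I do not foresee any genuine obstacle: once the parity observation is in hand, the whole proof is a direct syntactic rewriting of each constraint, with the degenerate arity-$0$ case being the only place that needs a brief case distinction.
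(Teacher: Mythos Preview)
Your argument is correct and is exactly the natural one: over $\mathbb{F}_2$ repeated occurrences of a variable in a parity constraint cancel in pairs, so each constraint can be rewritten with every variable appearing at most once, forcing $e=1$ while not increasing $u$, $t$, or $k$. The paper states this result as a \emph{Fact} and gives no proof, so there is nothing to compare against beyond observing that your reduction is the evident intended one. One cosmetic remark: in your canonical unsatisfiable instance you set $k_0=0$, but the paper's parameterizations take values in $\mathbb{N}$ (positive integers); it is cleaner to keep the original $V$ and $C\p{1}$ and append, say, both $W^{\text{odd}}_1\ang{v}$ and $W^{\text{even}}_1\ang{v}$ for some $v\in V$, which is unsatisfiable and keeps all parameters bounded as required.
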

By the above fact, we get the following corollary.
    
    \begin{corollary}
      \csp$\p{\W^{\N}}_{k,u}$, \csp$\p{W^{\cb{1}}}_{k, u}$ and
      \csp$\p{W^{\text{odd} }\cup W^{\text{even}}}_{k,u}$ are fixed-parameter tractable. 
      \end{corollary}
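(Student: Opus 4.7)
The plan is to treat the three problems separately. For the first two I expect Theorem~\ref{k-u-fpt} to apply directly; the parity case requires both the theorem and Fact~\ref{parity-without-t-u}.

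For \csp$\p{\W^{\N}}_{k,u}$, I would take $E = \N$, so $\NN \setminus E = \cb{0}$ is finite, and the ``moreover'' clause of Theorem~\ref{k-u-fpt} yields fixed-parameter tractability immediately. For \csp$\p{W^{\cb{1}}}_{k,u}$, I would take $E = \cb{1}$, which is itself finite, and the same clause applies.

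The parity case \csp$\p{W^{\text{odd}} \cup W^{\text{even}}}_{k,u}$ is the interesting one. Neither $\text{odd}$, $\text{even}$, nor their complements in $\NN$ are finite, so the ``moreover'' clause of Theorem~\ref{k-u-fpt} does not apply. My plan here has two steps. First, I would observe that the proof of Theorem~\ref{k-u-fpt} extends to the mixed constraint language $W^{\text{odd}} \cup W^{\text{even}}$: the vocabulary $\tau$ and the structure $\cal A$ are built exactly as before (unary predicates $R_{i,j}$ recording that a variable occurs $j$ times in constraint $i$), and the first-order formula is the natural variant that, for each constraint index $i \in \br{2, \size{C}}$, uses $\text{odd}$ or $\text{even}$ as the target set according to whether $C\p{i} \in W^{\text{odd}}$ or $C\p{i} \in W^{\text{even}}$. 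Since $h$ is still bounded by $e\p{I}$, this establishes fpt for \csp$\p{W^{\text{odd}} \cup W^{\text{even}}}_{k,u,e}$. Second, Fact~\ref{parity-without-t-u} provides an fpt reduction relating the $(k,u,e)$- and $(k,u)$-parameterizations for parity languages; its underlying preprocessing collapses repeated occurrences of a variable in a single parity constraint modulo~$2$, bounding $e$ by~$1$ without changing solvability, so the fpt status transfers to \csp$\p{W^{\text{odd}} \cup W^{\text{even}}}_{k,u}$.

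The main obstacle I anticipate is verifying step~(i): I must check that letting the target set depend on the constraint index does not inflate the size of the formula beyond a function of $k$, $u$, and $e$. This is essentially notational since the structure of $\phi_1, \phi_2$ from the proof of Theorem~\ref{k-u-fpt} is unchanged, but it must be done carefully so that the resulting formula has size depending only on the parameters. Step~(ii) is then a mechanical appeal to the Fact.
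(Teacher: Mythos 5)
Your argument is correct and follows the same route the paper intends: the first two languages fall under the \enquote{moreover} clause of Theorem~\ref{k-u-fpt} with $E=\N$ (cofinite) and $E=\cb{1}$ (finite), and the parity case is handled by combining the theorem (with parameter $e$) with the mod-$2$ collapsing behind Fact~\ref{parity-without-t-u}. You actually supply two details the paper leaves implicit -- that Theorem~\ref{k-u-fpt} must be extended from a single $W^E$ to the union $W^{\text{odd}}\cup W^{\text{even}}$ by letting the target set in the formula depend on the constraint index, and that the reduction one really needs is from the $(k,u)$-parameterization to the $(k,u,e)$-parameterization (the Fact as literally stated goes in the opposite, trivial direction) -- but both points are resolved exactly as you describe.
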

    Observing in Theorem \ref{k-u-fpt} that if $0 \not \in E$ and  $u\p{I} > t\p{I} k\p{I}+1$,
    then the instance has no satisfying assignment, and with Fact \ref{parity-without-t-u}, we get the following
    corollary about when $k$ and $t$ are the parameters. 
    \begin{corollary}
Let $E \subseteq \mathbb{N}$ (so $0 \not \in E$). Then \csp$\p{W^E}_{k, t, e}
$ is fixed-parameter tractable. Moreover, if $E$ or $\N \setminus E$ is finite, then  \csp$\p{W^{E}}_{k, t}$ is
          fixed-parameter tractable. Especially  \csp$\p{\W^{\mathbb{N}}}_{k,t}$,
\csp$\p{W^{\cb{1}}}_{k, t}$,
      \csp$\p{W^{\text{odd}}}_{k,t}$ are fixed-parameter tractable. 
      \end{corollary}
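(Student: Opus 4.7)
The plan is to derive this corollary from Theorem~\ref{k-u-fpt} via a short combinatorial bound on $u$ in terms of $k$ and $t$, together with Fact~\ref{parity-without-t-u} to dispose of the parameter $e$ in the odd case. The key observation is that $E \subseteq \N$ forces $0 \notin E$, so every non-weight constraint $W^E_r\ang{\sigma}$ of an instance $I=\p{V,C}$ must contain at least one variable of any satisfying assignment $D$. Since $\size{D}=k\p{I}$ and each variable of $V$ occurs at most $t\p{I}$ times across the whole instance, the number of non-weight constraints is at most $t\p{I}\,k\p{I}$, hence $u\p{I} \le t\p{I}\,k\p{I}+1$ for every yes-instance. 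Thus for the $\p{k,t,e}$ parameterization the algorithm first rejects any $I$ with $u\p{I} > t\p{I}\,k\p{I}+1$ and otherwise hands the instance to the $\csp\p{W^E}_{k,u,e}$ algorithm of Theorem~\ref{k-u-fpt}; since $u\p{I}$ is now bounded by a function of $k\p{I}$ and $t\p{I}$, this is an fpt algorithm in $\p{k,t,e}$.

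For the moreover part the same reduction applies, now invoking the moreover clause of Theorem~\ref{k-u-fpt} in place of its main part; this gives an fpt algorithm in $\p{k,t}$ whenever $E$ or $\NN\setminus E$ is finite. Two of the three highlighted examples follow immediately: for $\W^\N$ the complement $\NN\setminus\N=\cb{0}$ is finite, and for $W^{\cb{1}}$ the set $E=\cb{1}$ is itself finite.

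The third highlighted case, $W^{\text{odd}}$, lies outside the moreover clause because both $\text{odd}$ and $\NN\setminus\text{odd}$ are infinite. Here I would first apply the $\p{k,t,e}$ part of the corollary to obtain fpt tractability of $\csp\p{W^{\text{odd}}}_{k,t,e}$, and then invoke Fact~\ref{parity-without-t-u}, which is exactly the tool that removes the parameter $e$ for parity-type constraint languages. This last passage --- from $\p{k,t,e}$ down to $\p{k,t}$ for $W^{\text{odd}}$ --- is the only genuine obstacle, since the $\p{k,t,e}$ FPT algorithm cannot a priori cope with variables that appear many times inside a single parity constraint, and arity is unbounded; the rest of the argument is just the transparent counting bound $u \le tk+1$ fed into Theorem~\ref{k-u-fpt}.
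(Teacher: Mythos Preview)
Your proposal is correct and follows exactly the route the paper takes: the paper states just before the corollary that one ``observ[es] in Theorem~\ref{k-u-fpt} that if $0\notin E$ and $u(I)>t(I)k(I)+1$, then the instance has no satisfying assignment, and with Fact~\ref{parity-without-t-u}'' the corollary follows --- which is precisely your counting bound $u\le tk+1$ fed into Theorem~\ref{k-u-fpt}, together with Fact~\ref{parity-without-t-u} for the $W^{\text{odd}}$ case. Your identification of why $W^{\text{odd}}$ needs the extra step (both $\text{odd}$ and its complement are infinite) and how Fact~\ref{parity-without-t-u} disposes of $e$ for parity constraints matches the paper's intent exactly.
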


\section{Number of Occurrences of Variables - in W[1]}
The following problem is  $W\br{1}$-complete \cite{HW94, CCDF97}.\vspace{3mm}

\begin{tabular}{@{}l}
\textsc{Short NonDeterministic Turing Machine Acceptance}\\
\begin{tabular}{ @{}l p{10cm}}
\textit{Instance:} &A single-tape, single-head nondeterministic
Turing machine $M$; a word $x$ over the alphabet of $M$;
a positive integer $l\in\mathbb{N}$.\\
\textit{Parameter:} &$l$.\\
\textit{Question:} &Is there a computation of $M$ on input $x$ that
reaches a final accepting state in at most $l$ steps?\\
\end{tabular}
\end{tabular}

\begin{theorem}
  \label{appearance}
Let $\Gamma$ be  a (possibly infinite) constraint language. Then 
  \csp$\p{\Gamma}_{k, t}$ $\in$ \W$\br{1}$. 
\end{theorem}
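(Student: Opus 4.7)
The plan is to give an fpt-reduction from \csp$(\Gamma)_{k,t}$ to \textsc{Short NonDeterministic Turing Machine Acceptance}. Given an instance $I = (V, C)$ with $k(I) = k_0$ and $t(I) = t_0$, I would output in fpt time a single-tape, single-head nondeterministic TM $M$ of fpt-polynomial size, an input word $x$, and a step bound $l = f(k_0, t_0)$, such that $M$ accepts $x$ in at most $l$ steps iff $I$ has a satisfying assignment.

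The first step is a polynomial-time preprocessing. For each constraint $R_i\ang{\omega}$ in $C$, I would invoke the fpt-membership checker on $(i, \emptyset)$, at cost $f_B(0)\cdot(\log i)^c$. Call the constraint \emph{heavy} when $\emptyset \notin R_i$. Every satisfying $D \subseteq V$ with $\size{D}=k_0$ must touch every heavy constraint, and since each of its $k_0$ variables lies in at most $t_0$ constraints, $D$ touches at most $k_0 t_0$ constraints in total. So if more than $k_0 t_0$ heavy constraints exist, I would reject immediately.

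For the main reduction I would build $M$ with alphabet and state space of fpt-polynomial size. The alphabet contains one symbol per variable $v \in V$ encoding $v$'s \emph{profile}: the list, of length at most $t_0$, of constraints containing $v$ together with $v$'s positions inside each. The input $x$ lists these variable symbols in a fixed order, so the tape serves as a directory of variables. Then $M$ proceeds in three phases. (i) It nondeterministically writes $k_0$ variable symbols to the leftmost cells, guessing $D$. (ii) It sweeps these cells and, in $O((k_0 t_0)^2)$ further steps, builds on the adjacent tape region a record per touched constraint, storing that constraint's index and the partial set $T_C$ of positions occupied by $D$; at most $k_0 t_0$ constraints are touched and each has $\size{T_C} \le k_0 t_0$. (iii) It verifies that every heavy constraint appears in the records and that each recorded $(C, T_C)$ satisfies $T_C \in R_C$, each such check a single transition answered by a precomputed lookup table embedded in the transition function of $M$.

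The hard part will be keeping this lookup table within the fpt size budget. A naive enumeration of pairs $(C, T)$ with $\size{T} \le k_0 t_0$ produces $n^{\Omega(k_0 t_0)}$ entries, which is not fpt. My plan is to restrict the precomputation to those $(C, T_C)$ that can actually arise during $M$'s guessing, indexing entries by the combined profile contributions of the at most $k_0$ selected variables meeting $C$; combined with the fpt-membership assumption on $\Gamma$ and the fact that the number of distinct variable profiles is polynomial in $n$, this should keep the total lookup size fpt-polynomial. Making this accounting rigorous and verifying that the full simulation fits in $l = f(k_0, t_0)$ steps will be the technical heart of the proof.
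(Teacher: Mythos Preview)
Your high-level architecture matches the paper's almost exactly: the same reduction target, the same preprocessing on ``heavy'' constraints (those with $\emptyset\notin R$) with the same counting bound $k_0 t_0$, and the same three-phase machine that guesses $k_0$ variables, collects the touched constraints together with the induced position sets, and then checks membership. The divergence, and the genuine gap, is in how the membership check is carried out.

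You propose to answer each query ``is $T_C\in R_C$?'' by a single transition against a precomputed lookup table hard-wired into $M$. You correctly identify that a na\"ive table has $n^{\Omega(k_0 t_0)}$ entries, and you suggest indexing instead by the tuple of variable profiles of the at most $k_0$ selected variables that meet $C$. But this does not help: the number of distinct variable profiles is $\Theta(n)$ in general, so the number of $k_0$-tuples of profiles is still $n^{\Theta(k_0)}$, and for a single constraint with $\Theta(n)$ distinct variables there are already $\binom{\Theta(n)}{k_0}$ relevant subsets. An fpt-reduction must produce an instance of size $f(k_0,t_0)\cdot n^{O(1)}$ with the exponent \emph{independent} of the parameter, so no table of this shape fits. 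The ``fpt-membership assumption on $\Gamma$'' gives you a fast \emph{algorithm}, not a small table; invoking it during the reduction still requires you to enumerate all queries you might need, and there are too many.

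The paper sidesteps the table entirely: instead of precomputing answers, $M$ \emph{simulates} the membership checker $B$ on the fly for each touched constraint. The obstacle is that $B$'s running time on input $(i,T)$ is $f_B(\size{T})\cdot(\log i)^c$, and the factor $(\log i)^c$ depends on the instance, not only on $k_0,t_0$. The paper removes this factor by applying the Linear Speedup Theorem with the instance-dependent $\epsilon = 1/(\log i)^c$: this trades the $(\log i)^c$ time factor for an alphabet blow-up to size $\size{\Sigma}+\size{\Sigma}^{O((\log i)^c)}$, which is polynomial in $n$ and hence permissible for the reduced machine. After the speedup (and a standard one-tape simulation with quadratic overhead), each membership test costs $O\!\bigl(f_B(\size{T})^2\bigr)$ steps, bounded purely by a function of $k_0$ and $t_0$; the initial input-translation cost of the speedup is avoided because $M$ can prepare $B$'s input directly in the expanded alphabet. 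This time-for-alphabet trade is the missing idea in your proposal.
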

\begin{proof}
  Let $\Gamma\defeq\cb{R_i}_{i \in \N}$ and the Turing machine $B$ be the fpt membership checker
  of $\Gamma$.  We present an fpt-reduction that maps  any given  instance
$I=\p{V, C}$ (with the parameter values $k\p{I}=k_0$ and $t\p{I}=t_0$) to an instance $J$ of \textsc{Short
  NonDeterministic Turing Machine Acceptance}.
Let $m\defeq \size{C}$, and the  constraints from $\Gamma$ in $C$ be $R_{h_i}\ang{\omega_i}$
for $i \in \br{2, m}$.

 For a variable $v \in V$, define the set
$E_v$ as
\eq{
E_v \defeq  \cb{i | i \in \br{m}, \, \omega_i^{-1} \p{\cb{v}} \ne \emptyset}.
}
Notice that $\size{E_v} \le t_0$. Let $D$ be the set
\eq{
  D \defeq
  \cb{i | i \in \br{m}, \, \emptyset \not \in R_{h_i}}.
}
If $\size{D} > k_0\cdot t_0$, then the machine $M$ of
instance $J$ simply rejects. Otherwise $M$, \emph{knowing} the set $D$ and also the
sets $E_v$ for any
variable $v \in V$, starts with an empty
tape and runs in three steps.\\
  \textit{Step 1.
  }$M$ chooses nondeterministically the set of variables
  $A\subseteq V$ of size $\size{A} \le  k_0$. \\
  \textit{Step 2.} For each $i \in \br{2, m}$ 
  such that $i \in E_v$ for some $v \in A$,
  $M$ computes the set $T \defeq \cb{j |\omega_i\p{j}
    \in A}$.
  $M$ \textit{runs } $B\p{h_i, T}$, and if the output is $0$, then $M$ rejects.\\
  \textit{Step 3.} If $D \not \subseteq \bigcup_{v \in A} E_v$, then $M$
  rejects. Otherwise $M$ accepts.

  We should now show that machine $M$ can indeed run machine $B$ efficiently.
  This assures that parameter $l$ of instance $J$, that is runtime of $M$, is a fixed function of $k_0$ and
  $t_0$.  We   will use the following basic theorems. \vspace{2mm}
  
  Linear Speedup Theorem (see {\cite{Pap95}[Theorem 2.2]}) - 
    \label{linear-speedup}
\textit{    Let $L \in TIME\p{g\p{n}}$. Then, for any $\epsilon > 0$, $L \in
  TIME\p{g'\p{n}}$, where $g'\p{n}=\epsilon g\p{n} + n + 2$.}
\vspace{2mm}
    
  In the proof of Linear Speedup Theorem, if $G$ is the machine deciding $L$
 with runtime $g\p{n}$ and $G$ has $d$ tapes and alphabet $\Sigma$, then a
  machine $G'$ with runtime $g'\p{n}$ is constructed, such that $G'$ has $d$ tapes if $d>2$ and $2$ tapes if $d
  \le 2$, and its alphabet  is  $\Sigma'= \Sigma \cup
  \Sigma^{\ceil{\frac{6}{\epsilon}}}$. The cause of the $n + 2$ part of $g'\p{n}$ is  that
  $G'$ should scan the whole input and translate it in its own alphabet. 
  \vspace{2mm}
  
$1$-tape Simulation Theorem  (see {\cite{Pap95}[Theorem 2.2]}) - 
    \label{one-tape-tm}
    Given any $t$-tape Turing machine $H$ operating within time $f\p{n}$, we
    can construct a $1$-tape Turing machine $H'$ operating within time
    $O\p{f\p{n}^2}$     and such that, for any input $x$, $H\p{x}=H'\p{x}$. 
    \vspace{2mm}
    
In the proof of the above theorem, if $H$ has alphabet $\Sigma$, then the  size
of the alphabet of $H'$ is $2\size{\Sigma} + 2$.

Remember that runtime of $B$ on input $\p{i, T}$ is, by definition,
$f_B\p{\size{T}} \p{\log i}^c$ for some fixed $c$. 
Now let $\Sigma$ be the alphabet of $B$. Then  by applying the  Linear Speedup Theorem for $\epsilon = {1}/{\p{\log i}^c}$, there is a machine
  $B'$ that has two tapes and an alphabet $\Sigma'$ of size $\size{\Sigma'} = \size{\Sigma}+
  \size{\Sigma}^{\p{\log i}^c}$ and has runtime
  $f_B\p{\size{T}} + O\p{\log i + \size{T} + 2}$, such that $B'$ and $B$ have the same
  output on all inputs. Therefore, by $1$-tape Simulation Theorem above,  machine $M$
  having one tape   and an alphabet of size at least $2 \Sigma' + 2$ can \textit{simulate} $B'$ in time
  $O\p{\p{f_B\p{\size{T}}}^2}$. The $O\p{\log i + \size{T} + 2}$ part of the runtime can be
  avoided because $B'$ \textit{simulated} by $M$ does not have to translate the
  input to its alphabet.
\end{proof}


\begin{corollary}
  \label{parity-appearance-w1}
   \csp$\p{W^{\text{odd}} \cup W^{\text{even}}}_{k, t} \in $ \W$\br{1}$. 
 \end{corollary}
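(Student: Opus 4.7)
The plan is to invoke Theorem \ref{appearance} directly with $\Gamma \defeq W^{\text{odd}} \cup W^{\text{even}}$, so the work reduces to checking that this constraint language satisfies the hypothesis of that theorem, namely that it is fpt membership checkable (in fact, we only need it to be a constraint language in the sense defined in the Preliminaries).

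First I would fix an enumeration $\cb{R_i}_{i \in \N}$ of $W^{\text{odd}} \cup W^{\text{even}}$ so that from the index $i$ one can in polynomial time recover both the arity $m$ of the relation and whether it is the odd-weight or even-weight relation of that arity (for instance, pair up arity with a parity bit using a standard pairing function on $\N$). Given such an encoding, the membership checker $B$ on input $\p{i, T}$ simply decodes $i$ into $\p{m, p}$ with $p \in \cb{\text{odd}, \text{even}}$, verifies $T \subseteq \br{m}$, computes $\size{T}$, and outputs $1$ iff $\size{T}$ has parity $p$. This runs in time polynomial in $\size{T} + \log i$, which is well within the bound $f_B\p{\size{T}}\p{\log i}^c$ required for fpt membership checkability. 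Hence the hypothesis of Theorem \ref{appearance} is met.

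Having established this, the conclusion \csp$\p{W^{\text{odd}} \cup W^{\text{even}}}_{k,t} \in $ \W$\br{1}$ follows immediately from Theorem \ref{appearance}. The only subtlety worth flagging is the encoding step above: one needs to be a bit careful that the indices of $W^{\text{odd}}_m$ and $W^{\text{even}}_m$ across different $m$ can be decoded efficiently, since Theorem \ref{appearance}'s machine $M$ must know, from the index $h_i$ of each constraint, enough to invoke $B$; the standard pairing trick handles this with no real difficulty. There is no genuine obstacle beyond this bookkeeping, as all the heavy lifting (the reduction to \textsc{Short NonDeterministic Turing Machine Acceptance} and the careful runtime analysis via the Linear Speedup and $1$-tape Simulation Theorems) has already been carried out in the proof of Theorem \ref{appearance}.
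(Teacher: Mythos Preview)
Your proposal is correct and matches the paper's approach exactly: the corollary is an immediate specialisation of Theorem~\ref{appearance} to $\Gamma = W^{\text{odd}} \cup W^{\text{even}}$, and the paper does not even spell out the (obvious) fpt membership check that you carefully verify. The extra detail you supply about encoding indices and decoding parity is sound and harmless bookkeeping.
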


\section{CW Formulas}


\begin{theorem}\label{reduction}
Let $b\ge 0$. Then
\csp$ \p{CW^{\br{b}}}_{k}$ is in \W$\br{1}$.
\end{theorem}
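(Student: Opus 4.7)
The plan is to reduce $\csp\p{CW^{\br{b}}}_k$ to \textsc{Short NonDeterministic Turing Machine Acceptance}, following the strategy by which Cesati~\cite{Ces02} placed $\csp\p{W^{\cb{1}}}_k$ into $\W\br{1}$ and extending it to cope with the conditional structure of $CW^{\br{b}}$. Given an instance $I=\p{V,C}$ with $k_0=k\p{I}$, I would construct an NDTM $M$, whose alphabet and state set have size $f\p{k_0,b}$, together with an input word $x$ of size polynomial in $\size{I}$, so that $M$ has an accepting computation of length $\ell=\ell\p{k_0,b}$ if and only if $I$ has a satisfying assignment of weight $k_0$.

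First, I would classify each constraint $CW^{\br{b}}_{d,m}\ang{v_1,\ldots,v_{d+m}}$ by its premise $P\defeq\cb{v_1,\ldots,v_d}$ and body $B\defeq\cb{v_{d+1},\ldots,v_{d+m}}$; such a constraint is violated by an assignment $A$ precisely when $P\subseteq A$ and $\size{A\cap B}\notin\br{b}$, so with $\size{A}=k_0$ only constraints with $d\le k_0$ can ever be \emph{activated}. The machine $M$ would then work in three phases: Phase~1 nondeterministically guesses $A\subseteq V$ with $\size{A}=k_0$, storing each chosen variable compactly via the rich alphabet; Phase~2, for every $P'\subseteq A$, verifies that every conditional constraint with premise exactly $P'$ has body intersection with $A$ in $\br{b}$; Phase~3 verifies that every trigger constraint ($d=0$, equivalently $W^{\br{b}}_m$) is ``covered'' by $A$ with multiplicity in $\br{b}$.

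The main obstacle is that a single variable of $A$ can occur in unboundedly many constraints, so a naive scan cannot be completed in $\ell\p{k_0,b}$ steps. Following Cesati, I would shift this bookkeeping to the reduction: for each subset $P'\subseteq V$ with $\size{P'}\le k_0$, precompute aggregate information over all constraints with premise $P'$, sufficient to decide, for any $k_0$-set $A\supseteq P'$, whether some such constraint is violated. These tables are encoded on $x$ using the exponentially large alphabet of $M$, after which $M$ only has to guess and verify a certificate of size $2^{k_0}$ describing which premise patterns occur inside $A$, doing so in $f\p{k_0,b}$ constant-time queries. The hardest step, and what makes the proof ``somewhat technical'' as the paper warns, is performing the analogous aggregation for the covering obligation of the trigger constraints, whose certification is global in nature rather than one premise pattern at a time; this I would handle by a covering subprocedure mirroring Cesati's treatment of \textsc{Weighted Exact CNF}, carefully interleaving it with the conditional-constraint check so that the total runtime remains bounded by $\ell\p{k_0,b}$.
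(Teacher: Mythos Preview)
Your framework---reduce to \textsc{Short NDTM Acceptance}, guess $A$, precompute aggregate tables during the reduction, verify in $f\p{k_0,b}$ steps---matches the paper's. However, the proposal is missing precisely the technical idea that makes the verification possible. You write that for each premise $P'$ one should ``precompute aggregate information \ldots\ sufficient to decide, for any $k_0$-set $A\supseteq P'$, whether some such constraint is violated.'' But no polynomial-size table indexed only by $P'$ can do this: whether an activated constraint is violated depends on which of its body variables lie in $A\setminus P'$, and there are $\binom{n}{k_0-\size{P'}}$ choices for that. The paper resolves this by indexing the precomputed tables by \emph{pairs} $\p{B,G}$, where $B$ is a premise that actually occurs and $G$ is a set of at most $b+1$ variables: it stores the count $\size{\Delta_{B,G}}$ of constraints with premise $B$ whose body contains $G$, and the maximum body-multiplicity $\Lambda_{B,G}$ of $G$ among such constraints. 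After guessing $A$, the machine first checks $\Lambda_{B,G}\le b$ for all $B,G\subseteq A$ with $\size{G}\le b+1$; once this passes, the inclusion--exclusion sum $\sum_{\emptyset\ne G\subseteq A,\ \size{G}\le b}(-1)^{\size{G}-1}\size{\Delta_{B,G}}$ exactly equals the number of constraints with premise $B$ whose body meets $A$, and comparing with $\size{\Delta_{B,\emptyset}}$ detects any uncovered constraint. The truncation at $\size{G}\le b$ is what makes both the table size polynomial (in $n$, for fixed $b$) and the verification time bounded by a function of $k_0$ and $b$.

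Two smaller remarks. First, your separation into Phase~2 (conditional constraints) and Phase~3 (trigger constraints, $d=0$) is unnecessary: the case $d=0$ is simply $B=\emptyset$ and is handled by the same mechanism, so the ``global'' difficulty you anticipate there is no different from the per-premise case. Second, your phrase ``guess and verify a certificate of size $2^{k_0}$ describing which premise patterns occur inside $A$'' is unclear---once $A$ is guessed, its subsets are determined, so nothing further needs to be guessed; the work is the deterministic inclusion--exclusion computation over those subsets.
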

\begin{proof}
  We present an fpt-reduction from \csp $\p{CW^{\br{b}}}_{k}$ to \textsc{Short
    NonDeerministic Turing Machine Acceptance}.  
Any given instance $I\defeq \p{V,C}$ 
with parameter $k\p{I}\defeq k_0$ is mapped to an instance
  $J$ which is a Turing machine $M$ with parameter $f\p{k_0}$, $f$ to be fixed
  later.  
  
  Let $B, G \subseteq V$. Define $\Delta_{B, G}$ as the set
of integers $i \in \br{2, \size{C}}$ such that for the constraint
$C\p{i}\defeq CW_{d, m}^{\br{b}}\ang{\omega}$, it holds that $\omega\p{\br{d}} =B$ and $G
\subseteq \omega \p{\br{d+1, d+m}}$.
Define $\Lambda_{B, G}$ as
\eq{
  \Lambda_{B, G} \defeq
\max_{\substack{C\p{i}\defeq CW_{d, m}^{\br{b}}\ang{\omega} \\i \in \Delta_{B, G}}} \size{\omega^{-1}\p{G} \cap \br{d+1, d+m}}. 
}
\begin{fact}\label{claim2}
  Let $B, A \subseteq V$. Then $ \size{\Delta_{B, \emptyset}} = \size{\bigcup_{v \in A
   } \Delta_{B,
        \cb{v}}   }$ if and only if for
every constraint $CW_{d, m}^{\br{b}}\ang{\omega}$ in $C$ such that $\omega\p{\br{d}}
        =B$, it holds that $A \cap  \omega \p{\br{d+1, d+m}} \ne \emptyset$.
\end{fact}

To take advantage of the above fact,  $\size{\bigcup_{v \in A
   } \Delta_{B,
        \cb{v}}   }$ should be computed. The following claim
shows how to do it  efficiently.
\begin{claim}\label{claim1}
For every $B, A \subseteq V$, for any  constraint      $CW_{d,
  m}^{\br{b}}\ang{\omega}$ in $C$ such that $\omega\p{\br{d}}
=B$ implies $\size{\omega \p{\br{d+1, d+m}} \cap A} \le b$,   it holds that
      
 \eq{\label{inclusion-exclusion}
\size[\Big]{\bigcup_{v \in A
   } \Delta_{B,
        \cb{v}}} = \sum_{\substack{ G \subseteq A \\ 0 < \size{G} \le b}}
(-1)^{\size{G}-1} \; \size*{\Delta_{B,G} }.
}
\end{claim}
   \begin{proof} 
\eq{\label{computation-formula}
\size[\Big]{\bigcup_{v \in A
       } \Delta_{B,
       \cb{v}}} &=
 \sum_{\substack{ G \subseteq A
     \\ 0 < \size{G} }}
 (-1)^{\size{G}-1}\size[\Big]{\bigcap_{v \in G} \Delta_{B,
     \cb{v}}}\quad \quad\quad \text{(inclusion-exclusion principle) }\\
 &=\sum_{\substack{ G \subseteq A
     \\ 0 < \size{G} }}
(-1)^{\size{G}-1}\size[\Big]{\Delta_{B, G}}\\
&= \sum_{\substack{ G \subseteq A
    \\ 0 < \size{G} \le b}}
(-1)^{\size{G}-1} \; \size*{\Delta_{B,G} }.
}
The last equality holds, because by assumption $\forall
G\subset V \;\; \size{G} > b
\Rightarrow \Delta_{B, G}=\emptyset$.
\end{proof}


The machine $M$ of instance $J$ can \emph{lookup} the values 
$\size{\Delta_{B, G}}$ and $\Lambda_{B, G}$ 
for $\size{B} \le k_0$ and $\size{G}\le b+1$.
\noindent
$M$ starts with a blank tape and operates in $3$ steps and accepts if no rejection occurs.

\noindent
\textit{Step 1.
} $M$ chooses nondeterministically the set of variables
$A\subseteq V$ of size $ \size{A}\le k_0$.\\
\textit{Step 2.} $M$ iterates over all $B, G \subseteq A$,
such that $\size{G} \le b+1$, and rejects if $\Lambda_{B, G}> b$.\\
\textit{Step 3.
} $M$ iterates over all $B \subseteq A$,  and 
performs the summation on the right side of Equation (\ref{inclusion-exclusion})
 and rejects if the result is not equal to $\size{\Delta_{B, \emptyset}}$.

 We show now that if instance $I$ has a
 satisfying assignment $E$, then the machine $M$ accepts.  Assignment $E$ satisfies every
 constraint in $C$, and so has size $\size{E} = k_0$. Consider the
 computation branch of $M$, in which  in Step $1$ the set $A$ is chosen equal to
 $E$. In this branch, $M$  rejects neither in Step $2$, nor in  Step $3$ (by
 Fact \ref{claim2} and Claim \ref{claim1}), therefore $M$ accepts.

For the other direction, we prove that if machine $M$ accepts, then  in every
accepting branch of computation, $A$ is a satisfying assignment of $I$. 
We have to only consider those constraints $CW_{d, m}^{\br{b}}\ang{\omega}$ in $C$ such that
$\omega\p{\br{d}}\subseteq A$. As there is no rejection in Step 2, we have
$\size{\omega^{-1} \p{A} \cap \br{d+1, d+m}}
\le  b$. So the premises
of Claim \ref{claim1} hold  for
 sets $A$ and $\omega\p{\br{d}}$, and thus Equation (\ref{inclusion-exclusion}) holds. Therefore,
by Fact \ref{claim2}, as there is no rejection in Step $3$, we have
$\size{\omega \p{\br{d+1, d+m}} \cap A} \ge 1$. So  $A$ satisfies the
constraint.

\noindent\textbf {More Details}

For each variable $v \in V$, there is a symbol $\sigma_v$ in the alphabet of $M$, with
an alphabetical order over these symbols. Sets
of variables are presented by the alphabetically sorted string of their symbols.

To do the summation in Equation  (\ref{inclusion-exclusion}) in Step 3,
each time the calculated partial sum is added to the next summand. There are $
\le \binom{k_0}{b}$ summands, and each summand is an
integer between $-n$ and $n$. Therefore the partial sum is always between $ -n
\binom{k_0}{b+1}$ and $ n \binom{k_0}{b+1}$. Machine $M$ has a symbol $s_i$ for each integer
$i$ in this range and \textit{knows} how to add two such integers. 

By definition,  $ \size{\Delta_{B,G}} < \size{C}$ and $\Lambda_{B, G}\le n$.
These values are
\textit{stored} in $M$, only if $\size{G} \le$ $\min \p{ b+1, k_0}$ and $\size{B}
\le k_0$, and there is a  constraint $CW_{d, m}^{\br{b}}\ang{\omega}$ in $C$, such that $\omega\p{\br{d}} =B$ and
$G \subseteq \omega \p{\br{d+1, d+m}}$.
Therefore the number of stored values is
$ \le n \sum_{i=0}^{b+1}\binom{n}{i}$.
The
lookups for other values will safely be answered by value $0$. These values
are stored  and accessed by using the \textit{trie} data structure, with the pairs $\p{B,G}$ as
the key, and the  proper $s_i$ symbol as the value. Notice that the length of
the key is bounded by a fixed function of $k_0$.  The above two arguments show
that the runtime of the reduction is bounded by
$O\p{\sum_{i=0}^{b+1}\binom{n}{i}} n^{O\p{1}}$. 

In steps
2 and 3, $M$ iterates over subsets of $A$, all of size $\le k_0$. As discussed above, the time needed for each lookup and summation is
bounded by a 
function of $k_0$, therefore  the runtime of $M$ is bounded by $f\p{k_0}$ for a fixed function $f$.
\end{proof}
\begin{lemma}
  \label{combine}
Let $b \ge 0$. Then  \csp$ \p{W}_{k,t} \p{CW^{\br{b}}}_{k}$ is in \W$\br{1}$.
\end{lemma}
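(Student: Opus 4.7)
The plan is to combine the Turing machine constructions from the proofs of Theorem~\ref{appearance} and Theorem~\ref{reduction} into a single fpt-reduction to \textsc{Short NonDeterministic Turing Machine Acceptance}. Given an instance $I = \p{V, C}$ with $k\p{I} = k_0$ and $t\p{I} = t_0$, I would split the constraint sequence $C$ into two sub-sequences $C_W$ (the $W$-constraints, to which the parameter $t$ applies) and $C_{CW}$ (the $CW^{\br{b}}$-constraints), and verify them with two independent procedures inside the same machine $M$, sharing only the guessed set $A$.

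For the $W$-part I would follow the proof of Theorem~\ref{appearance}: precompute and hardcode into $M$ the set $E^W_v \defeq \cb{i : C\p{i} \in C_W,\; \omega_i^{-1}\p{\cb{v}} \ne \emptyset}$ for every $v \in V$, together with $D_W \defeq \cb{i : C\p{i} \in C_W,\; \emptyset \notin R_{h_i}}$. If $\size{D_W} > k_0 t_0$, the reduction outputs a trivially rejecting instance, since a satisfying assignment $A$ must meet every constraint in $D_W$ and each variable lies in at most $t_0$ such constraints. For the $CW^{\br{b}}$-part I would follow the proof of Theorem~\ref{reduction}: precompute the tables $\Delta_{B,G}$ and $\Lambda_{B,G}$ over $C_{CW}$ for all pairs $\p{B, G}$ with $\size{B} \le k_0$ and $\size{G} \le b+1$ that actually arise from some constraint, and store them in the trie described there.

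The machine $M$ then starts with a blank tape and runs in four steps. Step~1: nondeterministically guess $A \subseteq V$ with $\size{A} \le k_0$. Step~2: for each $i \in \bigcup_{v \in A} E^W_v$, compute $T \defeq \cb{j : \omega_i\p{j} \in A}$ and run the fpt membership checker $B\p{h_i, T}$, rejecting on output $0$; also reject if $D_W \not\subseteq \bigcup_{v \in A} E^W_v$. Steps~3 and~4 are verbatim Steps~2 and~3 from the proof of Theorem~\ref{reduction}, using the precomputed tables and the inclusion-exclusion identity of Claim~\ref{claim1} applied to $C_{CW}$. Correctness of the combined reduction follows summand-wise from the correctness arguments of the two earlier proofs.

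The main obstacle is showing that the runtime $l$ of $M$, which is the parameter of the produced instance, depends only on $k_0$, $t_0$ and $b$. The $W$-part makes at most $\size{\bigcup_{v \in A} E^W_v} \le k_0 t_0$ calls to $B$, each on a tuple $T$ with $\size{T} \le k_0 t_0$; the Linear Speedup and $1$-tape Simulation Theorems are invoked exactly as in Theorem~\ref{appearance} so that these calls cost time bounded in $k_0$ and $t_0$ alone, independently of the relation indices $h_i$. The $CW^{\br{b}}$-part incurs only trie lookups and the bounded inclusion-exclusion summation from Theorem~\ref{reduction}, whose cost depends only on $k_0$ and $b$. Adding the two contributions yields an fpt bound $f\p{k_0, t_0, b}$ on $l$, giving the desired membership in \W$\br{1}$.
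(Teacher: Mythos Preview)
Your proposal is correct and follows essentially the same approach as the paper: split the instance into its $W$-part and its $CW^{\br{b}}$-part and combine the Turing machines from Theorems~\ref{appearance} and~\ref{reduction} into a single reduction to \textsc{Short NonDeterministic Turing Machine Acceptance}. The only cosmetic difference is that the paper keeps the two machines $M_1$ and $M_2$ as black boxes, lets each guess its own set $A_1$, $A_2$, and then checks $A_1 = A_2$, whereas you inline both constructions around a single shared guess of $A$; these are equivalent, and your runtime analysis for the $W$-part is fine.
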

\begin{proof}
We reduce the problem to \textsc{Short NonDeterministic Turing Machine
  Acceptance}. Let $I\defeq\p{V, C}$ be the given instance. Let $I_1\defeq\p{V, C_1}$ and
$I_2\defeq\p{V, C_2}$ be the corresponding instances of
\csp$ \p{W}_{k,t}$ and \csp$ \p{CW^{\br{b}}}_{k}$,  such that $C= C_1 \cup C_2$
and the parameter values of $I_1$ and $I_2$ are that of $I$. Notice that $V$
is the set of variables of $I_1$, $I_2$ and  $I$.

  By Theorem \ref{appearance},  there is a  nondeterministic Turing machine
  $M_1$ such that a set  $A_1\subseteq V$ is a satisfying assignment of
$I_1$ if and only if a branch of $M_1$
selects the set $A_1$ in Step $1$ and accepts.

By Theorem \ref{reduction},  there is a  nondeterministic Turing machine $M_2$
such that a set $A_2 \subseteq V$ is a satisfying assignment of $I_2$ if and
only if a branch of $M_2$
selects the set $A_2$ in Step $1$ and accepts.

Consider the  the nondeterminist Turing machine $M$ described in the following. $M$ nondeterministically simulates $M_1$
and then nondeterministically simulates $M_2$. A branch of $M$ accepts if and
only if $M_1$ and $M_2$ accept on the  corresponding branches and $A_1 = A_2$.

If $M$ accepts, then clearly $A_1$($=A_2$) is a satisfying assignment of $I$. 
On the other hand, if a set $A\subseteq V$ is a satisfying assignment of $I$,
then $A$ is a satisfying assignment of $I_1$ and $I_2$  by definition, therefor $M$ has an
accepting branch of computation where $A_1=A_2=A$.
\end{proof}

\section{Partial sets and their Completions}

\begin{definition}
  \label{completion-partial-def}
Given a relation $R$ (of arity say     $q$), and a set  $T\subseteq \br{q}$, $T \not
\in R$,   a  \emph{completion} of $T$ is a minimal set $U$ such that
\eq[set]{
  &U \in R\\
 & T \subset U. 
}
We denote the set of all completions of $T$ by $\completion_R\p{T}$.

We say that a set $T_1\subset \br{q}$ is \emph{partial} (to $R$) if $T_1 \not\in
R$ and any partial $T_2 \subset T_1$ has a completion that is a subset of $T_1$. We
denote the set of all sets partial to $R$ by $\partl\p{R}$.
\end{definition}
Notice that if $\emptyset \not \in R$, then $\emptyset \in \partl\p{R}$.

\begin{fact}
If $T$ is a minimal set such that $T \subseteq \br{q}$ and $T \not \in R$, then
$T$ is partial.
\end{fact}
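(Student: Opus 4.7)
The plan is to unpack the definition of \emph{partial} and exploit minimality directly; the proof will be very short because the universal quantifier in the definition turns out to be vacuous.

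First I would verify the trivial condition: by hypothesis $T \not\in R$, which is exactly the first requirement in Definition \ref{completion-partial-def} for $T$ to be partial. Second, I would show that the universally quantified condition holds vacuously. Suppose, for contradiction, that there exists some $T_2 \subsetneq T$ which is partial to $R$. By the first clause in the definition of partial, this forces $T_2 \not\in R$. But $T$ is minimal among subsets of $\br{q}$ that are not in $R$, so every proper subset of $T$ must lie in $R$, contradicting $T_2 \not\in R$. Hence no such $T_2$ exists, and the clause ``any partial $T_2 \subset T_1$ has a completion that is a subset of $T_1$'' is satisfied vacuously.

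Combining the two observations gives that $T$ satisfies both clauses of the definition, so $T \in \partl(R)$.

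The only potential obstacle is a reader's worry that the recursive-looking definition of \emph{partial} requires an induction argument; I would therefore be explicit that the second clause is a universal statement over partial proper subsets and that minimality kills the supply of such subsets outright, so no induction is needed. If desired, one may also mention that the same argument gives the base case for induction on $|T|$ when proving properties of partial sets later.
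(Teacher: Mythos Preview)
Your argument is correct: minimality of $T$ among subsets not in $R$ forces every proper subset of $T$ to lie in $R$, so there are no partial proper subsets and the second clause of Definition~\ref{completion-partial-def} holds vacuously. The paper states this as a Fact without proof, and your reasoning is exactly the intended one-line justification.
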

\begin{fact}\label{partial-has-completion} 
  Let  $W \in R$. If $T \subset W$  
  and $T \not \in R$,
  then there is a set $U \subseteq W$ such that $U \in \completion_R\p{T}$.
\end{fact}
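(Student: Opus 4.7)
The plan is to extract $U$ as an inclusion-minimal element of the collection of all sets sitting between $T$ and $W$ that still lie in $R$. More precisely, define
\[ \mathcal{C} \defeq \cb{U' \mid T \subseteq U' \subseteq W \text{ and } U' \in R}. \]
This collection is nonempty since $W$ itself belongs to it by hypothesis ($W \in R$ and $T \subseteq W$). Because every member of $\mathcal{C}$ is a subset of the finite set $\br{q}$, where $q$ denotes the arity of $R$, the collection $\mathcal{C}$ admits an inclusion-minimal element $U$, and by construction $U \subseteq W$.

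It then remains to verify that this $U$ is a completion of $T$ in the sense of Definition \ref{completion-partial-def}. Strict containment $T \subsetneq U$ follows because $T \subseteq U$ by construction, while $U \in R$ together with the hypothesis $T \not\in R$ forces $T \ne U$. For the minimality clause I must check that no set $U''$ whatsoever (not merely no subset of $W$) satisfies $U'' \subsetneq U$, $T \subseteq U''$ and $U'' \in R$; this is the only point where a moment of care is needed. It is, however, immediate: any such hypothetical $U''$ would satisfy $U'' \subsetneq U \subseteq W$, hence would lie in $\mathcal{C}$, contradicting the minimality of $U$ inside $\mathcal{C}$. I do not anticipate any deeper obstacle, as the argument is simply an inclusion-minimal selection among finitely many subsets of $\br{q}$; the only subtlety is transferring minimality from the restricted collection $\mathcal{C}$ to the unrestricted one, which the observation above handles automatically.
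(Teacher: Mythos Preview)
Your argument is correct. The paper itself states this result as a \emph{Fact} without proof, so there is no paper proof to compare against; your inclusion-minimal selection inside $\mathcal{C}$ is precisely the natural justification, and your observation that any hypothetical smaller $U''$ automatically lies in $\mathcal{C}$ (because $U'' \subsetneq U \subseteq W$) cleanly handles the one point that needs a word of explanation.
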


It is not hard to see that for a  set $D \subseteq \br{q}$, we have $D\in R$ if and only if for every $T \not
\in R$ such that $T \subseteq D$, there is $U \in \completion_R\p{T}$ such that
$U \subseteq D$. But we can be
much more efficient: $D\in R$ if and only if for every $T \in \partl\p{R}$ such that $T \subseteq D$, there is $U \in \completion_R\p{T}$ such that
$U \subseteq D$. The reduction in the proof of the following theorem is based
on this idea.

\begin{theorem}\label{completion-size-lemma} 
  Let $\Gamma=\cb{R_i}_{i\in \mathbb{N}}$ be a (possibly infinite)
  constraint language and  there be an integer
$d \ge 1$ such that for any $R \in \Gamma$, for every $T \in R$, it holds $\size{T}  \le d$.
  Then \csp$\p{\Gamma}_{k} $ is in \W$\br{1}$.
\end{theorem}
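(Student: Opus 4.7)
The plan is to give an fpt-reduction from $\csp(\Gamma)_k$ to \textsc{Short NonDeterministic Turing Machine Acceptance}, adapting the construction of Theorem \ref{reduction} so that it exploits the partial-set characterization stated immediately before the theorem. The starting point is: for every $R\in\Gamma$ and every position set $S$, one has $S\in R$ iff every $T\in\partl(R)$ with $T\subseteq S$ admits some completion $U\in\completion_R(T)$ with $U\subseteq S$. Because every $T'\in R$ has $|T'|\le d$, every completion has size at most $d$, and consequently every partial set that possesses a completion has size at most $d-1$; partial sets with no completion can be ignored, by Fact \ref{partial-has-completion}.

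First I would perform the precomputation. Given an instance $I=(V,C)$ with $k(I)=k_0$, for each constraint $C(i)=R_{h_i}\ang{\omega_i}$ enumerate, using the fpt membership checker of $\Gamma$, all partial sets $T\in\partl(R_{h_i})$ (of size at most $d-1$) and all completions $U\in\completion_{R_{h_i}}(T)$ (of size at most $d$). The bound $|R_{h_i}|\le\binom{\arity(R_{h_i})}{d}$ keeps this polynomial in the input size. For each pair of variable sets $(B,H)$ with $|B|\le d-1$ and $|H|\le d$ that is realised by some such triple $(i,T,U)$ (i.e. $\omega_i(T)=B$ and $\omega_i(U\setminus T)=H$), record the appropriate counts indexed by $(B,H)$, in the spirit of the quantities $\Delta_{B,G}$ and $\Lambda_{B,G}$ of Theorem \ref{reduction}; in particular, store the total number $\alpha_B$ of triggered pairs $(i,T)$ with $\omega_i(T)=B$.

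The NDTM $M$ of the reduced instance stores these counts in its alphabet together with symbols for variables and for integers in the polynomial range, exactly as in the closing part of the proof of Theorem \ref{reduction}. In Step 1, $M$ nondeterministically guesses $A\subseteq V$ with $|A|=k_0$. In Step 2, $M$ iterates over $B\subseteq A$ with $|B|=d$ (and, implicitly, $|B|=d+1$) to reject any $A$ that forces $|\omega_i^{-1}(A)|>d$ for some $i$, analogously to the upper-bound check of Theorem \ref{reduction}. In Step 3, for each $B\subseteq A$ with $|B|\le d-1$, $M$ performs an inclusion-exclusion sum over $H\subseteq A\setminus B$ with $|H|\le d$ to obtain the number of triggered pairs $(i,T)$ with $\omega_i(T)=B$ that have some completion $U$ with $\omega_i(U)\subseteq A$; if this number matches $\alpha_B$, $M$ proceeds, otherwise it rejects. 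Correctness follows directly from the partial-set characterization, and the runtime is $f(k_0,d)$ because only $\binom{k_0}{\le d-1}\cdot\binom{k_0}{\le d}$ subset-pairs are examined and each lookup/arithmetic step is on alphabet-encoded integers of polynomial magnitude.

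The main obstacle, compared with Theorem \ref{reduction}, is that the partial-set characterization contains a disjunction over the several completions of a partial set, whereas a $CW^{[b]}$ constraint has just a single conclusion set. I would handle this by indexing counts by both the partial-image $B$ and the completion-extension image $H$, so that the disjunction is absorbed into the inclusion-exclusion over $H\subseteq A\setminus B$ of size at most $d$. Several book-keeping technicalities remain, notably the possibility that $\omega_i$ is non-injective so that $\omega_i(T)$ and $\omega_i(U\setminus T)$ can share variables, and the possibility that distinct partial sets of the same constraint map to the same $B$; both are absorbed into the precomputed counts because the size bound $|U|\le d$ keeps the relevant combinatorics polynomial in the input and the verification polynomial in $k_0$.
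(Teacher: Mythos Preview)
Your plan---a direct reduction to \textsc{Short NDTM Acceptance} built on the partial-set characterisation---is a different route from the paper's, and Step~3 as you describe it does not work. In Theorem~\ref{reduction} the inclusion--exclusion succeeds because one is computing $\bigl|\bigcup_{v\in A}\Delta_{B,\{v\}}\bigr|$, a union indexed by \emph{single variables}; the intersections $\bigcap_{v\in G}\Delta_{B,\{v\}}=\Delta_{B,G}$ are then indexed by sets $G$ of size $\le b$. In your setting the desired quantity is $\bigl|\bigcup_{H}S_H\bigr|$ where $S_H=\{(i,T):\omega_i(T)=B,\ \exists U\in\completion_{R_{h_i}}(T),\ \omega_i(U\setminus T)=H\}$ and $H$ ranges over subsets of $A$ of size $\le d$. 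The inclusion--exclusion terms are $\bigl|\bigcap_{H\in\mathcal F}S_H\bigr|$, indexed by \emph{families} $\mathcal F$ of such $H$'s, not by single sets; counts stored only under keys $(B,H)$ cannot produce these terms. One can rescue the scheme by noting that after Step~2 every pair $(i,T)$ lies in at most $2^d$ of the $S_H$, so only families with $|\mathcal F|\le 2^d$ contribute and there are $n^{O(d\cdot 2^d)}$ such families to precompute---but that is a new argument, not the one you sketched. As written, ``the disjunction is absorbed into the inclusion--exclusion over $H$'' is a gap.

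The paper sidesteps this difficulty by not reducing directly to the Turing-machine problem. It introduces an auxiliary Boolean variable $\lambda_E$ for every set $E\subseteq V_1$ arising as $\omega(T)$ or $\omega(U)$, with binding constraints forcing $\lambda_E$ to hold exactly when $E$ lies inside the assignment. The statement ``some completion of $T$ is contained in the assignment'' then becomes the single $CW^{[2^d]}_{1,|B|}$ constraint ``if $\lambda_{\omega(T)}$ holds then at least one $\lambda_{\omega(U)}$ holds''. This yields an instance of $\csp(W)_{t,k_\le}(CW^{[2^d]})_{k_\le}$ with $k_\le=k_0+2^{k_0}$, and Lemma~\ref{combine} (which packages Theorems~\ref{appearance} and~\ref{reduction}) finishes. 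The auxiliary variables are precisely what linearise the disjunction over completions that your direct approach has to confront combinatorially.
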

\begin{proof}
  We show that   \csp$\p{\Gamma}_{k}$ is fpt-reducible  to
  $  \csp\p[\big]{W}_{t, k_{\le}}\p[\big]{CW^{\br{2^d}}}_{k_{\le}}$.
  The result follows by Lemma \ref{combine} and Lemma \ref{k-le-to-k}.
  Given an instance $I \defeq \p{V_1, C}$ with parameter $k\p{I} = k_0$, 
  we  construct an instance $J\defeq\p{V_1\cup V_2, C'}$ as follows.

  First notice that because $\Gamma$ is fpt membership checkable, for every $i
  \in \mathbb{N}$, the set $\partl\p{R_i}$ can be
computed in time $O\p{r^d} \p{\log i}^{O\p{1}}$, where $r\defeq \arity\p{R_i}$. Also for each $T \in \partl\p{R}$ the set
$\completion_R\p{T}$ can be computed in time $O\p{r^d} \p{\log i}^{O\p{1}}$.

  For each constraint $C\p{i}\defeq R\ang{\omega_1}$, $i \in \br{2, \size{C}}$, and for each $T$ in $\partl\p{R}$, 
  introduce the new variable
$\lambda_{\omega_1\p{T}}$ and the set of new variables $B$,
\eq{
  B\defeq \cb{\lambda_{\omega_1\p{U}} | U \in \completion_R\p{T}  }
  }
(if they are not already introduced), and add a constraint $CW_{1, \size{B}}^{\br{2^d}}\ang{\omega_2}$ to $C'$, where $\omega_2\p{1}
  =\lambda_{\omega_1\p{T}}$ and $\omega_2\p{\br{2,
      \size{B}+1}} =B$.
  
 For each new variable $\lambda_E$ introduced above ($E\subseteq V_1$, a set of variables), add a constraint
  $CW_{\size{E},1}^{\br{1}}\ang{\omega}$ to $C'$ where $\omega\p{\br{\size{E}}}=E$ and $\omega\p{\size{E}+1}=\lambda_E$.
  Also, for every variable $x \in E$, add the constraint
  $CW_{1, 1}^{\br{1}}\ang{\omega}$ to $C'$, where $\omega\p{1}=\lambda_E$ and $\omega\p{2}= x$.  
  We call these \emph{binding} constraints (notice that for every $b\ge 0$, it holds
   $CW_{b,1}^{\br{1}} \in CW^{\br{2^d}}$). 

  
Set $V_2$ to be the set of all newly introduced variables.
Add to $C'$ the constraint $W_{\size{V_1}}^{\cb{k_0}}\ang{\omega_3}$  where $   \omega_3\p{\br{\size{V_1}}}\defeq V_1$.
      Finally, add the constraint $C'\p{1} \defeq   W_{\size{V_1}+\size{V_2}}^{\br{k_0 + 2^{k_0}}}\ang{\omega_4}$,  where 
   $\omega_4\p{\br{\size{V_1}+\size{V_2}}}\defeq V_1 \cup V_2$.

 This sets parameter
$k_\le$ of $J$  to $k_\le\p{J} = k_0+2^{k_0}$.

\begin{fact}\label{binding}
 For two sets $D\subseteq V_1$ and $Q\subseteq V_2$, let $D \cup Q$ satisfy $J$. Then the binding
constraints in $C'$ ensure that
\eq{
\forall \lambda_E \in V_2 \quad\quad \lambda_E \in Q \;\text{ iff }\;
E\subseteq  D.
}

\end{fact}
By Fact \ref{binding}, it is enough to prove the following claim to show that $I$ has a satisfying assignment if and only if $J$ has a
satisfying assignment. 
\begin{claim}\label{D_and_Q}
  Let sets $D\subseteq V_1$ and $Q\subseteq V_2$ be such that 
   $Q=\cb{\lambda_E|\lambda_E \in
  V_2,\,E \subseteq D}$. Then the assignment $D$
satisfies $I$ iff $D \cup Q$ 
satisfies $J$.
\end{claim}
\begin{proof}
 If $D$ is a  solution of  $I$, let $\psi$ be a constraint in $C'$. If $\psi$ is a binding constraint, then it is trivially satisfied by
$D\cup Q$. Otherwise, let $\psi$ be a constraint $CW^{\br{2^d}}_{1,
    \size{B}}\ang{\omega_2}$ that corresponds to some constraint $R\ang{\omega_1}
\in C$ and a set $T \in \partl(R)$ (with set $B$ as defined in the
construction above).
So $\omega_2\p{1} = \lambda_{\omega_1\p{T}}$ and $\omega_2\p{\br{2,
      \size{B}+1}} =B$. Now if $\omega_1\p{T} \not \subseteq D$, then $\lambda_{\omega_1\p{T}} \not\in Q$ and 
assignment $D$ trivially   satisfies $\psi$. Else if $\omega_1\p{T} \subseteq D$,
first, $\lambda_{\omega_1\p{T}} \in Q$.
Secondly, we apply the Fact \ref{partial-has-completion}:
We have $\omega_1^{-1}\p{D} \in R$,  $T \subset \omega_1^{-1} \p{D}$ and $T\not \in R$,
therefore there is a $U \in \completion_R\p{T}$ such that $U \subseteq
\omega_1^{-1} \p{D}$, which means $\omega_1\p{U} \subseteq D$ (notice that the
number of such $U$ is at most $2^d$).
Therefore  variable $\lambda_{\omega_1\p{U}}$ in
set $V_{2}$ exists and $\lambda_{\omega_1\p{U}} \in Q$.
It follows that 
$\psi$ is satisfied by assignment
$D\cup Q$. 

For the other direction, assume for the sake of contradiction  that $D \cup Q$ 
satisfies $J$,  but $D$ does not satisfy $I$. The first constraint $C\p{1}$ is
trivially satisfied, so let 
 $R\ang{\omega}$ be another constraint of $C$  such
that $ \omega^{-1}\p{D} \not \in R$. 
By Definition \ref{completion-partial-def},  there is $T \in \partl\p{R}$ such that $T \subseteq
\omega^{-1}\p{D}$ and for every $U \in \completion_R\p{T}$, it holds that $U \not \subseteq$ $\omega^{-1}\p{D}$.
There is, by construction, variable
 $\lambda_{\omega\p{T}} \in V_2$, and  $\lambda_{\omega\p{T}}\in Q$. 
And either $\completion_R\p{T} = \emptyset$ or for every $U\in
\completion_R\p{T} $, 
variable $\lambda_{\omega\p{U}} \not \in Q$.
Therefore, the constraint $\psi$ in $J$ that (by construction) corresponds to
$R\ang{\omega}$ and $T$, is not satisfied by assignment $D \cup Q$, a contradiction.
\end{proof}
\end{proof}

  \begin{corollary} \label{in-w1}
   Let $d \ge 0$. Then   \csp$ \p{W^d}_{k} \in  \W\br{1}$.
\end{corollary}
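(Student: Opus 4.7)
The plan is to derive the corollary as an immediate instance of Theorem~\ref{completion-size-lemma}. First, I would unpack the definition $W^d = \bigcup_{A \subseteq \br{0, d}} W^A$: every relation $R \in W^d$ is of the form $W^A_m$ for some arity $m$ and some $A \subseteq \br{0, d}$, so any $T \in R$ satisfies $\size{T} \in A \subseteq \br{0, d}$. This gives the uniform bound $\size{T} \le d$ on all tuples of all relations in $W^d$, which is exactly the hypothesis of Theorem~\ref{completion-size-lemma}, up to the boundary case $d = 0$. That boundary case is handled by invoking the theorem with the constant $d' = 1$ instead, since $\size{T} \le 0$ trivially implies $\size{T} \le 1$.

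Next, I would briefly confirm the standing fpt-membership-checkability assumption that the paper requires of every constraint language under consideration. Membership of a tuple $T$ in $W^A_m$ reduces to computing $\size{T}$ and testing whether this integer belongs to $A \subseteq \br{0,d}$, which is clearly polynomial in the length of the index of the relation and in $\size{T}$; so $W^d$ is fpt-membership checkable as required.

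Given these two observations, Theorem~\ref{completion-size-lemma} applied to $\Gamma = W^d$ immediately yields \csp$\p{W^d}_{k} \in \W\br{1}$. There is really no obstacle here: all the technical work has already been absorbed into Theorem~\ref{completion-size-lemma} and its supporting Lemma~\ref{combine} and Theorem~\ref{reduction}. This corollary simply records that $W^d$ is the canonical concrete family fitting the bounded-tuple-size hypothesis, delivering the ``ultimate generalization'' advertised in the introduction.
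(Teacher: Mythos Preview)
Your proposal is correct and follows exactly the intended route: the paper states this corollary without proof because it is meant to be an immediate application of Theorem~\ref{completion-size-lemma}, and you have correctly verified that $W^d$ satisfies the bounded-tuple-size hypothesis (with the $d=0$ edge case handled by taking $d'=1$). The extra remark on fpt-membership-checkability is a nice sanity check, though the paper treats this as a standing assumption.
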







\bibliography{kuk}


 \end{document}